\newtheorem{proposition}{Proposition}
\newtheorem{lemma}{Lemma}
\newtheorem{corollary}{Corollary}
\newcommand{\gf}{\mathrm{GF}}
\renewcommand{\dh}{d_{\mathrm{H}}}
\newcommand{\As}{A_{\mathrm S}}
\begin{document}
%opening
\title{Constant-Weight Array Codes}
\author{Maximilien Gadouleau\footnote{Department of Computer Science, Durham University, Durham DH1 3LE, UK. Email: \url{m.r.gadouleau@durham.ac.uk}}}
\date{\today}
\maketitle

\begin{abstract}
Binary constant-weight codes have been extensively studied, due to both their numerous applications and to their theoretical significance. In particular, constant-weight codes have been proposed for error correction in store and forward. In this paper, we introduce constant-weight array codes (CWACs), which offer a tradeoff between the rate gain of general constant-weight codes and the low decoding complexity of liftings. CWACs can either be used in the on-shot setting introduced earlier or in a multi-shot approach, where one codeword consists of several messages. The multi-shot approach generalizes the one-shot approach and hence allows for higher rate gains. We first give a construction of CWACs based on concatenation, which generalizes the traditional erasure codes, and also provide a decoding algorithm for these codes. Since CWACs can be viewed as a generalization of both binary constant-weight codes and nonrestricted Hamming metric codes, CWACs thus provide an additional degree of freedom to both problems of determining the maximum cardinality of constant-weight codes and nonrestricted Hamming metric codes. We then investigate their theoretical significance. We first generalize many classical bounds derived for Hamming metric codes or constant-weight codes in the CWAC framework. We finally relate the maximum cardinality of a CWAC to that of a constant-weight code, of a nonrestricted Hamming metric code, and of a spherical code. 
\end{abstract}

\section{Introduction} \label{sec:introduction}
Binary constant-weight codes have been widely studied \cite{BSSS90} due to their applications to codes for asymmetric channels \cite{CR79}, DC-free codes \cite{vTB89}, and spherical codes \cite{AVZ00}. Recently, their connections to constant-dimension codes \cite{XF09, ES09}, which were proposed for error and erasure correction in noncoherent network coding \cite{KK08}, renewed their interest. Constant-weight codes have also been proposed for error correction in store and forward \cite{GG10}. Although nonbinary constant-weight codes have also received some attention \cite{OS02, CL07}, we shall only consider binary constant-weight codes. These codes also have a strong theoretical significance, as bounds on the maximum cardinality of a constant-weight code also yield bounds on the cardinality of binary Hamming metric codes \cite{MS77, MR+77}. Also, the Johnsonian is a very structured mathematical object, known as an association scheme.

In \cite{GG10}, we showed how binary codes in general and constant-weight codes in particular can be used for error correction in store and forward. This approach generalizes the traditional approach of lifting a Hamming metric code (typically a Reed-Solomon code), and higher data rates can be achieved by using such constant-weight codes. However, the encoding and complexities when using general constant-weight codes are usually much higher than their counterparts for liftings of Reed-Solomon codes. In this paper, we introduce constant-weight array codes, which can be viewed as a tradeoff between data rate and complexity because the two extreme cases of CWACs are exactly liftings of Hamming metric codes and general constant-weight codes. Due to their particular structure, CWACs can also be viewed as generalized constant-weight codes. Therefore, we believe CWACs also have an important theoretical significance, which can already be seen through the bounds we derive in this paper.

The rest of the paper is organized as follows. Section \ref{sec:preliminaries} reviews some necessary background. Section \ref{sec:concatenation} provides a construction of CWACs based on concatenation, and designs a decoding algorithm for these codes. The relations between CWACs and other classes of codes are investigated in Section \ref{sec:relations}, while the maximum cardinality of CWACs is studied in Section \ref{sec:CWACs}.

\section{Preliminaries} \label{sec:preliminaries}
In this section, we review the following three classes of codes: constant-weight codes, Hamming metric codes, and spherical codes. We recall that the minimum distance of a code is the minimum distance over all pairs of distinct codewords, while the diameter of an anticode is the maximum distance between two codewords.

We denote the set of all vectors in $\gf(2)^n$ and weight $w$ as $J(n,w)$. A {\em constant-weight code} is defined as any subset of $J(n,w)$ with a given minimum Hamming distance, while a constant-weight anticode is also a subset of $J(n,w)$ with a prescribed diameter. Let $B(n,w,d)$ be the maximum cardinality of a binary constant-weight code of length $n$, weight $w$, and minimum Hamming distance $2d$. Similarly, let $\beta(n,w,\delta)$ be the maximum cardinality of a constant-weight anticode in $J(n,w)$ with diameter $2 \delta$. A lower bound on $\beta(n,w,\delta)$ is given by the number of subsets of cardinality $w$ which intersect a given subset of cardinality $w-\delta+2i$ in at least $w-\delta$ positions for any $0 \leq i \leq \delta$:
\begin{equation} \label{eq:beta}
	\beta(n,w,\delta) \geq \max_{0 \leq i \leq \delta} \sum_{d=0}^{\delta} \binom{w-\delta + 2i}{i- \frac{d-\delta}{2}} \binom{n-w+\delta-2i}{\frac{\delta+d}{2} - i}.
\end{equation}
The lower bound in (\ref{eq:beta}) is known to be tight provided $n$ is large enough \cite{EKR61} and is conjectured in \cite{Fra78} to be tight for all values.

A {\em nonrestricted Hamming metric code} is a nonempty subset of $[q]^n$, where $q \in \mathbb{N}$ and $[q] = \{0,1,\ldots,q-1\}$. Let $C(q,n,d)$ be the maximum cardinality of a code in $[q]^n$ with minimum Hamming distance $d$. Finally, let $\gamma(q,n,\delta)$ be the maximum cardinality of an anticode in $[q]^n$ with diameter $\delta$. Remarkably, the value of $\gamma(q,n,\delta)$ is determined in \cite{AK98} and given by
\begin{equation} \label{eq:gamma1}
	\gamma(q,n,\delta) = q^{\delta-2r}\sum_{i=0}^r \binom{n}{i} (q-1)^i,
\end{equation}
where $r$ is the largest integer such that $2r < \min \left\{\delta+1, 2\frac{n-\delta-1}{q-2} \right\}$.
%\begin{equation} \label{eq:gamma2}
%	n-\delta +2r < \min \left\{n+1, n-\delta +2\frac{n-\delta-1}{q-2} \right\}.
%\end{equation}

A {\em spherical code} of dimension $n$ is a set of points on the unit sphere in $\mathbb{R}^n$ with the Euclidean norm. The maximum cosine $s$ of a spherical code is related to its minimum Euclidean distance $d_E$ by $s = 1 - \frac{d_E^2}{2}$. The maximum cardinality of a spherical code of dimension $n$ and maximum cosine $s$ is denoted as $\As(n,s)$. The exact value of $\As(n,s)$ is known for $s \leq 0$ and given by (see \cite{AVZ00})
\begin{eqnarray}
	\nonumber
	\As(n,s) &=& \left\lfloor 1 - \frac{1}{s} \right\rfloor, \quad \text{if } s \leq -\frac{1}{n}\\
	\nonumber
	\As(n,s) &=& n+1, \quad \text{if } -\frac{1}{n} \leq s < 0\\
	\nonumber
	\As(n,0) &=& 2n,
\end{eqnarray}
while bounds on $\As(n,s)$ for $s>0$ are given in \cite{Lev79, BDB96}.

\section{Constant-weight array codes} \label{sec:CWACs}

\subsection{Construction of CWACs by concatenation} \label{sec:concatenation}

The model reviewed in Section \ref{sec:introduction} identifies a message with a binary vector whose weight is equal to the number of packets in the message. Accordingly, we investigated using constant-weight codes for error control in store and forward in \cite{GG10}, where each codeword corresponds to a message. This approach falls short when the number of errors that typically occur on the network is very low. For instance, in order to correct one packet loss and one packet injection every ten transmissions, one need to use a constant-weight code with minimum distance at least $5$ for each transmission. Therefore, in this setting, we could correct one packet loss and one packet injection at every single transmission. In this section, we generalize the constant-weight code approach by considering codewords associated to sequences of messages. In the example considered above, we could then use codewords made of ten messages, and use a code with error correction capability of $2$. In general, using codes on several messages allows to adjust the error correction of the code more accurately to the number of errors produced by the network, hence increasing the data rate.

A {\em constant-weight array code} (CWAC) is defined as a nonempty subset of words in $J(m,w)^n$. Hence a constant-weight code in $J(m,w)$ can be viewed as a CWAC of length $1$. In this section, we design a class of good CWACs with low-complexity decoders. In Proposition \ref{prop:concatenation} below we construct a CWAC by concatenation, where the outer code is a nonrestricted Hamming metric code and the inner code is a constant-weight code. This can be viewed as a generalization of the lifting operation defined in \cite{GG10}.

\begin{proposition}[Construction of CWACs by concatenation]\label{prop:concatenation}
Let $\mathcal{I} \subseteq J(m,w)$ be a constant-weight code with minimum distance $2f$ and $\mathcal{O} \subseteq \mathcal{I}^n$ be a nonrestricted code with minimum distance $e$. Then the concatenated code $\mathcal{O} \circ \mathcal{I}$ is a CWAC in $J(m,w)^n$ with minimum distance at least $2ef$.
\end{proposition}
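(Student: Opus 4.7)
The plan is to unpack the definition of the concatenation and then bound the distance contribution position-by-position. Implicitly, the CWAC distance on $J(m,w)^n$ is taken to be the sum, over the $n$ outer coordinates, of the Hamming distances between the corresponding inner constant-weight vectors in $J(m,w)$; I will use this throughout. A codeword of $\mathcal{O} \circ \mathcal{I}$ is by construction a tuple $(c_1,\dots,c_n) \in \mathcal{I}^n \subseteq J(m,w)^n$, so the codewords indeed lie in $J(m,w)^n$ and the minimum constant-weight constraint is automatic, leaving only the distance claim.

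First I would take two distinct codewords $\mathbf{c} = (c_1,\dots,c_n)$ and $\mathbf{c}' = (c'_1,\dots,c'_n)$ of $\mathcal{O} \circ \mathcal{I}$, viewed as words of $\mathcal{O}$ over the alphabet $\mathcal{I}$. Let $S = \{i : c_i \neq c'_i\} \subseteq \{1,\dots,n\}$ be the set of outer positions where they disagree. Because $\mathcal{O}$ has minimum distance $e$ as a nonrestricted Hamming metric code over the alphabet $\mathcal{I}$, we have $|S| \geq e$.

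Next I would bound the inner contribution of each disagreeing position. For every $i \in S$, $c_i$ and $c'_i$ are distinct codewords of $\mathcal{I}$, hence $\dh(c_i,c'_i) \geq 2f$ since $\mathcal{I}$ has minimum distance $2f$. Summing these lower bounds over $S$ (and adding zero contributions from the positions outside $S$) yields
\begin{equation*}
\sum_{i=1}^n \dh(c_i,c'_i) \;=\; \sum_{i \in S} \dh(c_i,c'_i) \;\geq\; |S| \cdot 2f \;\geq\; 2ef,
\end{equation*}
which is exactly the desired bound on the CWAC distance between $\mathbf{c}$ and $\mathbf{c}'$.

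I do not expect any real obstacle here beyond making sure the distance on $J(m,w)^n$ is being interpreted as the coordinate-wise sum of Hamming distances on the inner blocks; once that convention is in place, the argument is a direct two-line concatenation bound in the style of the classical Forney inequality. The only mild subtlety is that the outer code $\mathcal{O}$ is defined over the alphabet $\mathcal{I}$ (not over $J(m,w)$), so distinct outer symbols are automatically distinct elements of $\mathcal{I}$ and thus automatically at inner distance $\geq 2f$; this is what ties the two minimum distances together multiplicatively.
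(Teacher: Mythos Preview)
Your proof is correct and is precisely the ``straightforward'' argument the paper alludes to (the paper omits the proof entirely). The only thing worth noting is that your coordinate-wise sum of inner Hamming distances is simply the Hamming distance on $\gf(2)^{mn}$, so there is no interpretive issue at all.
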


The proof of Proposition \ref{prop:concatenation} is straightforward and hence omitted.

%\begin{proof}
%First, all the codewords in $\mathcal{O} \circ \mathcal{I}$ are vectors in $J(mn,nw)$, where all the columns have constant weight $w$. The code $\mathcal{O} \circ \mathcal{I}$ hence is a CWAC in $J(m,w)^n$. Finally, by the property of concatenation \cite[p. 308]{MS77}, $\mathcal{O} \circ \mathcal{I}$ has minimum distance at least equal to $ef$.
%\end{proof}

%\begin{corollary} \label{cor:concatenation}
%For all $f \leq w$, $A(m,n,w,d) \geq C\left(B(m,w,f), n, \left\lceil \frac{d}{f} \right\rceil \right)$.
%\end{corollary}

The concatenation construction is well fitted for our purpose, as it ensures that each message is a codeword in a constant-weight code with minimum distance $2f$.

We first investigate when a modified Reed-Solomon code can be used as outer code. We remark that for $w=1$, the construction in Proposition \ref{prop:concatenation} reduces to the lifting operation. The lifting of a Reed-Solomon code requires that $n \leq m$, that is, the total weight $W = n$ and the total length $M = nm$ are related by $W \leq \sqrt{M}$. Therefore, this is a strong limitation on the number of packets in a message. On the other hand, we remark that when $w > 1$, the constraint on the parameter, given by $n \leq B(m,w,f)$ is loose when $f=1$, that is $A(m,n,w,d) \geq {m \choose w}^{n-d+1}$ for ${m \choose w} \geq n$.

We now study the decoding of CWACs obtained by concatenation. By \cite[Proposition A.1]{Gur02}, if there exists an algorithm running in time $t_i$ to uniquely decode $\mathcal{I}$ up to distance $f-1$ and if there exists an algorithm running in time $t_o$ to uniquely decode $\mathcal{O}$ from $t$ errors and $u$ erasures where $2t+u < e$, then there exists an algorithm running in $O(nt_i + ft_o)$ that uniquely decodes $\mathcal{O} \circ \mathcal{I}$ up to $ef-1$. First, let us assume that $\mathcal{I}$ has cardinality $2^{\lfloor \log_2 B(m,w,f) \rfloor}$. The algorithm to decode $\mathcal{I}$ is straightforward: enumerate all the vectors in $J(m,w)$ at distance at most $f-1$ from the received column and check if it is a codeword. The maximum complexity is on the order of $O[m \beta(n,w,f-1)]$. If $n \leq |\mathcal{I}|$, then the outer code can be a Reed-Solomon code (or a modified Reed-Solomon code, either shortened or extended). The receiver can hence use the Berlekamp-Massey algorithm, which has complexity on the order of $O[e^2 \log_2 B(m,w,f)]$ binary operations to correct the errors and erasures. The total complexity is hence on the order of $O[m \beta(n,w,f-1) + fe^2 \log_2 B(m,w,f)]$.

\subsection{Bounds on CWACs} \label{sec:bounds_CWACs}

Although constant-weight array codes were defined as sets of words in $J(m,w)^n$, by expanding each element of $J(m,w)$ as a column vector they can be equivalently defined as sets of matrices in $\gf(2)^{mn}$ with weight $w$ on any column. Constant-weight array codes can finally be defined as a subclass of constant-weight codes of length $M = mn$, constant-weight $W = nw$, and where all columns (the vectors in $J(m,w)$) have weight $w$. Therefore, CWACs are both a special class and a generalization of constant-weight codes. Since their minimum Hamming distance is equal to twice their minimum modified Hamming distance, the main scope of this section is the study of the maximum cardinality $A(m,n,w,d)$ of a CWAC in $J(m,w)^n$ with minimum Hamming distance $2d$.

First of all, let ${\bf J}_{m \times n}$ be the all-ones matrix in $\gf(2)^{m \times n}$. Then for any matrices ${\bf M}, {\bf N} \in J(m,w)^n$, we have ${\bf M} + {\bf J}, {\bf N} + {\bf J} \in J(m,m-w)^n$ and $\dh({\bf M} + {\bf J}, {\bf N} + {\bf J}) = \dh({\bf M}, {\bf N})$. Therefore, $A(m,n,w,d) = A(m,n,m-w,d)$ for all $n$, $m$, and $w$, so we assume $2w \leq m$ without loss of generality henceforth. 

The term $A(m,n,w,d)$ is a non-decreasing function of $m$, $n$, and a non-increasing function of $d$. The values of $A(m,n,w,d)$ can be readily determined for extreme values of $d$: $A(m,n,w,1) = {m \choose w}^n$, $A(m,n,w,d) = 1$ if $d > nw$, and $A(m,n,w,nw) = \left\lfloor \frac{m}{w} \right\rfloor$. 

The Singleton bound for Hamming metric codes can easily be extended to CWACs.
\begin{proposition}[Singleton bound for CWACs]
For all $d \geq w+1$, $$A(m,n,w,d) \leq A(m,n-1,w,d-w).$$
\end{proposition}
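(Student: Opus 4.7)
The plan is to mimic the classical proof of the Singleton bound by puncturing a coordinate, while keeping track of the ``column weight'' structure of a CWAC.

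Let $\mathcal{C} \subseteq J(m,w)^n$ be a CWAC with minimum Hamming distance $2d$ and cardinality $A(m,n,w,d)$. I would consider the puncturing map $\pi : J(m,w)^n \to J(m,w)^{n-1}$ that deletes the last column, and show two things: first, that $\pi$ is injective on $\mathcal{C}$, and second, that $\pi(\mathcal{C})$ is itself a CWAC with minimum distance at least $2(d-w)$.

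For injectivity, suppose ${\bf M} \neq {\bf N}$ in $\mathcal{C}$ satisfy $\pi({\bf M}) = \pi({\bf N})$. Then ${\bf M}$ and ${\bf N}$ can only differ in the last column, and since both last columns lie in $J(m,w)$, their Hamming distance is at most $2w$. This would force $\dh({\bf M},{\bf N}) \leq 2w < 2d$, contradicting the minimum distance of $\mathcal{C}$; here the strict inequality is exactly the hypothesis $d \geq w+1$. Hence $|\pi(\mathcal{C})| = |\mathcal{C}|$.

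For the distance of the punctured code, any two codewords ${\bf M}, {\bf N} \in \mathcal{C}$ satisfy $\dh({\bf M},{\bf N}) \geq 2d$, and deleting one column removes two vectors in $J(m,w)$ whose pairwise Hamming contribution is at most $2w$. Therefore $\dh(\pi({\bf M}),\pi({\bf N})) \geq 2d - 2w = 2(d-w)$, so $\pi(\mathcal{C})$ is a CWAC in $J(m,w)^{n-1}$ with minimum distance at least $2(d-w)$. By definition of $A$, this gives $|\pi(\mathcal{C})| \leq A(m,n-1,w,d-w)$, and combining with the injectivity yields the desired inequality.

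I do not anticipate a real obstacle: both steps are one-line arguments, and the hypothesis $d \geq w+1$ is there precisely to make the puncturing injective and to keep the resulting minimum distance parameter $d-w$ positive so that $A(m,n-1,w,d-w)$ is well-defined in the usual sense.
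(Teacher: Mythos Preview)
Your argument is correct and is essentially the same as the paper's: puncture the last column, use that any two vectors in $J(m,w)$ are at Hamming distance at most $2w$ to bound the loss, and conclude both injectivity and the minimum distance $2(d-w)$ of the punctured code. You have simply made the injectivity step explicit, whereas the paper folds it into the single observation $\dh({\bf M},{\bf N}) \leq \dh({\bf M}',{\bf N}') + 2w$.
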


\begin{proof}
For any two ${\bf c}, {\bf d} \in J(m,w)$ we have $\dh({\bf c}, {\bf d}) \leq 2w$. For any two ${\bf M}, {\bf N} \in J(m,w)^n$ we denote the first $n-1$ columns as ${\bf M}'$ and ${\bf N}'$, respectively and their last coordinate as ${\bf m}_n$ and ${\bf n}_n$, respectively. We have $\dh({\bf M}, {\bf N}) = \dh({\bf M}', {\bf N}') + \dh({\bf m}_n, {\bf n}_n) \leq \dh({\bf M}', {\bf N}') + 2w$. Let $\mathcal{C}$ be a code in $J(m,w)^n$ with minimum distance $d > w$ and cardinality $A(m,n,w,d)$, then the code $\mathcal{C}' = \{{\bf M}' : {\bf M} \in \mathcal{C}\}$ has minimum distance $d - w$ and cardinality $|\mathcal{C}|$.
\end{proof}

%The Graham-Sloane bound for constant-weight codes \cite{GS80} is generalized below for the case of CWACs. Its proof is very similar to the constant-weight case, and is hence omitted.
%
%\begin{proposition}[Graham-Sloane bound for CWACs] \label{prop:GS}
%Let $q$ be the smallest prime power such that $q \geq n$. Then $A(m,n,w,d) \geq q^{1-d} {m \choose w}^n$. Furthermore, for $d=2$, $A(m,n,w,2) \geq (mn)^{-1} {m \choose w}^n$.
%\end{proposition}

The Johnson bound of type I \cite{Joh62} can be easily generalized to the case of CWACs. However, such a generalization is straightforward and the resulting bound is not usually tight. Therefore, we focus on the Johnson bounds of type II. Before determining the generalization of the Johnson bounds of type II \cite{Joh62} to CWACs, we study the maximum cardinality $\alpha(m,n,w,\delta)$ of an antiCWAC in $J(m,w)^n$ with diameter $\delta$. First, by symmetry we have $\alpha(m,n,w,\delta) = \alpha(m,n,m-w,\delta)$ hence we consider $w \leq m$. Some special values include $\alpha(m,n,w,0) = 1$ and $\alpha(m,n,w,nw) = {m \choose w}^n$. We give a construction of antiCWACs in Lemma \ref{lemma:concatenation_anti} by concatenating a constant-weight anticode and a Hamming metric anticode. This, in turn, yields a lower bound on the maximum cardinality of antiCWACs.

\begin{lemma}[Construction of a CWAC anticode by concatenation] \label{lemma:concatenation_anti}
Let $\mathcal{I} \subseteq J(m,w)$ be a constant-weight anticode with diameter $\phi$ and $\mathcal{O} \subseteq \mathcal{I}^n$ be an anticode with diameter $\phi$, then the concatenated code $\mathcal{O} \circ \mathcal{I}$ is an $(m,n,w)$ antiCWAC with diameter at most $\epsilon \phi$. Therefore, for all $\phi \leq w$, $\alpha(m,n,w,\delta) \geq \gamma\left( \beta(m,w,\phi), n, \left\lceil \frac{\delta}{\phi} \right\rceil \right)$.
\end{lemma}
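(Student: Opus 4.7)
The plan is to mirror Proposition \ref{prop:concatenation}, but tracking anticode diameters rather than minimum distances. Reading the statement with $\mathcal{O}$ as an anticode of diameter $\epsilon$ (the symbol $\epsilon$ appears in the conclusion and must refer to the diameter of the outer anticode), the structure parallels the standard concatenation bound, except that ``minimum distance at least the product'' is replaced by ``diameter at most the product''.

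First I would establish the diameter estimate on $\mathcal{O}\circ\mathcal{I}$. Take two matrices $\mathbf{M}, \mathbf{N} \in \mathcal{O}\circ\mathcal{I}$ and view them as length-$n$ sequences of $\mathcal{I}$-columns, hence as a pair of codewords of $\mathcal{O}\subseteq\mathcal{I}^n$. Because $\mathcal{O}$ is a Hamming anticode of diameter $\epsilon$ over the alphabet $\mathcal{I}$, the two sequences agree on at least $n-\epsilon$ columns. On each of the (at most $\epsilon$) columns where they differ, the two elements of $\mathcal{I}$ have Hamming distance at most $2\phi$, since $\mathcal{I}$ has modified diameter $\phi$. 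Summing over columns gives $\dh(\mathbf{M},\mathbf{N}) \leq 2\epsilon\phi$, which in the modified convention used for $\alpha(m,n,w,\delta)$ (under which a parameter $d$ corresponds to Hamming distance $2d$, consistent with $B$, $\beta$, and $A$) yields diameter at most $\epsilon\phi$, as claimed.

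For the lower bound on $\alpha(m,n,w,\delta)$, I would instantiate this construction with extremal ingredients. Choose $\mathcal{I}$ as a largest constant-weight anticode in $J(m,w)$ of diameter $\phi$, so that $|\mathcal{I}| = \beta(m,w,\phi)$; the hypothesis $\phi\leq w$ guarantees this is a nontrivial anticode. Identifying $\mathcal{I}$ with $[\beta(m,w,\phi)]$, choose $\mathcal{O}$ as a largest anticode in $[\beta(m,w,\phi)]^n$ of Hamming diameter $\epsilon$, where $\epsilon$ is the largest integer with $\epsilon\phi\leq\delta$; then $|\mathcal{O}| = \gamma(\beta(m,w,\phi), n, \epsilon)$. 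By the first part, $\mathcal{O}\circ\mathcal{I}$ is an $(m,n,w)$ antiCWAC of diameter at most $\delta$ and the same cardinality as $\mathcal{O}$, delivering the stated bound (with the appropriate integer-rounding convention for $\delta/\phi$).

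No significant technical obstacle arises; the argument is almost mechanical once the concatenation is set up. The only care required is in bookkeeping between the modified-distance convention used for constant-weight codes and anticodes and the ordinary Hamming convention used by $\gamma(q,n,\delta)$ on $[q]^n$, so that the factor of two from converting Hamming to modified distance on the inner code propagates cleanly into the bound $\epsilon\phi$ for the concatenation.
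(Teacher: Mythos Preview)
Your proof is correct and matches the paper's intent exactly: the paper omits the proof, saying only that it is similar to that of Proposition~\ref{prop:concatenation}, and your argument is precisely this mirror image---replace ``minimum distance at least'' by ``diameter at most'' and sum the column contributions. One minor point: your choice of $\epsilon$ as the largest integer with $\epsilon\phi\le\delta$ gives $\epsilon=\lfloor\delta/\phi\rfloor$, whereas the stated bound has $\lceil\delta/\phi\rceil$; the floor is what the argument actually delivers, so the ceiling in the displayed bound appears to be a typo in the paper rather than a flaw in your reasoning.
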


The proof of Lemma \ref{lemma:concatenation_anti} is similar to that of Proposition \ref{prop:concatenation} and is hence omitted. The lower bound on $\beta(n,w,\delta)$ in (\ref{eq:beta}) and the exact value of $\gamma(q,n,\delta)$ in (\ref{eq:gamma1}) can hence be used to compute lower bounds on $\alpha(m,n,w,\delta)$. We now give the Johnson bounds of type II for CWACs.

\begin{proposition}[Johnson bound of type II for CWACs] \label{prop:johnson_generalized}
For all $l|n$, $v \leq l$, $0 \leq \delta \leq \frac{n}{l}\min\{v,l-v\}$,
\begin{equation} \nonumber %\label{eq:johnson_anticode}
	A(m,n,w,d) \leq \frac{1}{\alpha \left(l,\frac{n}{l},v,\delta \right)} \left( \frac{m^l}{w^v(m-w)^{l-v}} \right)^{\frac{n}{l}} 
	A \left((m-1)l,\frac{n}{l},lw-v,d-\delta \right).
\end{equation}
\end{proposition}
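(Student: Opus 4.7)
The plan is to prove the bound by a double-counting argument. Let $\mathcal{C} \subseteq J(m,w)^n$ be a CWAC with minimum Hamming distance $2d$ and $|\mathcal{C}| = A(m,n,w,d)$, and let $\mathcal{A} \subseteq J(l,v)^{n/l}$ be an antiCWAC of diameter $2\delta$ with $|\mathcal{A}| = \alpha(l,n/l,v,\delta)$. For each codeword ${\bf M} \in \mathcal{C}$ and each row selection ${\bf i} = (i_1,\ldots,i_n) \in [m]^n$, I would introduce the extracted pattern ${\bf P}({\bf M},{\bf i}) \in \{0,1\}^n$ with $j$-th entry $M_{i_j,j}$, grouped into $n/l$ consecutive blocks of length $l$ so that ${\bf P}({\bf M},{\bf i}) \in J(l,v)^{n/l}$ precisely when each block has weight $v$. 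The target is to estimate the set $S = \{({\bf M},{\bf i}) : {\bf M}\in\mathcal{C},\ {\bf P}({\bf M},{\bf i})\in\mathcal{A}\}$ in two ways.

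For the first estimate, fixing ${\bf M}$ and ${\bf P}\in\mathcal{A}$, the number of row selections realizing ${\bf P}({\bf M},{\bf i}) = {\bf P}$ is $w^{nv/l}(m-w)^{n(l-v)/l}$: in each column $j$ there are $w$ admissible values of $i_j$ when $P_j = 1$ and $m-w$ when $P_j = 0$, and every element of $J(l,v)^{n/l}$ has weight exactly $nv/l$. Summing over $\mathcal{C}\times\mathcal{A}$ gives $|S| = |\mathcal{C}|\,\alpha(l,n/l,v,\delta)\,(w^v(m-w)^{l-v})^{n/l}$. For the second estimate, I would fix ${\bf i}$ and send each contributing ${\bf M}$ to the projection $\pi({\bf M},{\bf i})$ obtained by deleting the entry at row $i_j$ from column $j$ and then flattening each block of $l$ truncated columns into a single column of length $(m-1)l$; when ${\bf P}({\bf M},{\bf i}) \in J(l,v)^{n/l}$ this column has weight $lw - v$, so $\pi({\bf M},{\bf i}) \in J((m-1)l,lw-v)^{n/l}$. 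The key technical step is the distance computation: for any distinct ${\bf M},{\bf N}\in\mathcal{C}$ both contributing at the same ${\bf i}$, the contribution of the $n$ deleted positions to $\dh({\bf M},{\bf N})$ equals exactly $\dh({\bf P}({\bf M},{\bf i}),{\bf P}({\bf N},{\bf i})) \leq 2\delta$ by the anticode property of $\mathcal{A}$, while $\dh({\bf M},{\bf N}) \geq 2d$, so the projected images sit at Hamming distance at least $2(d-\delta)$ and are in particular distinct. The projected set is therefore a CWAC in $J((m-1)l,lw-v)^{n/l}$ with minimum distance $2(d-\delta)$, whence the number of contributing ${\bf M}$ is at most $A((m-1)l,n/l,lw-v,d-\delta)$; summing over the $m^n = (m^l)^{n/l}$ row selections gives $|S| \leq m^n A((m-1)l,n/l,lw-v,d-\delta)$.

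Comparing the two expressions for $|S|$ and rearranging immediately yields the proposition. The main obstacle is the distance calculation for the projection, which crucially uses the anticode hypothesis on $\mathcal{A}$: without it, up to $n$ units of distance could have been lost when deleting $n$ coordinates, whereas the hypothesis caps this loss at $2\delta$ and produces the exact reduction from $d$ to $d-\delta$ that appears in the bound. Everything else reduces to routine combinatorial counting, and the range condition $0 \leq \delta \leq \frac{n}{l}\min\{v,l-v\}$ is exactly what is needed so that a non-trivial anticode $\mathcal{A}$ of the stated diameter exists in $J(l,v)^{n/l}$.
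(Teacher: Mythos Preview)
Your proposal is correct and follows essentially the same double-counting argument as the paper: your extracted pattern ${\bf P}({\bf M},{\bf i})$ is the paper's $f({\bf u},{\bf C})$, your projection $\pi({\bf M},{\bf i})$ is the paper's punctured codeword ${\bf C}'$, and your two estimates of $|S|$ correspond exactly to the paper's equations (\ref{eq:ja1}) and (\ref{eq:ja2}) for the double sum $\sum_{{\bf u}}\sum_{{\bf C}}\chi(f({\bf u},{\bf C}),\mathcal{L})$. The key distance identity $\dh({\bf M},{\bf N}) = \dh(\pi({\bf M},{\bf i}),\pi({\bf N},{\bf i})) + \dh({\bf P}({\bf M},{\bf i}),{\bf P}({\bf N},{\bf i}))$ and the final rearrangement are identical.
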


\begin{proof}
Let $\mathcal{C}$ be an $(m,n,w)$ CWAC with minimum distance $d$ and cardinality $A(m,n,w,d)$ and let $\mathcal{L}$ be an antiCWAC in $J(l,v)^{\frac{n}{l}}$ with diameter $\delta$ and cardinality $\alpha(l,\frac{n}{l},v,\delta)$. For all ${\bf u} \in [m]^n$ and all ${\bf C} \in \mathcal{C}$, we define $f({\bf u}, {\bf C}) = (c_{0,u_0}, c_{1,u_1}, \ldots, c_{n-1,u_{n-1}}) \in \gf(2)^n$. Suppose $f({\bf u}, {\bf C}_0) , f({\bf u}, {\bf C}_1) \in \mathcal{L}$, then ${\bf C}'_0, {\bf C}'_1 \in J((m-1)l,lw-v)^{\frac{n}{l}}$, where ${\bf C}'_0$ and ${\bf C}'_1$ are the puncturings of the coordinates $(u_j,j)$ of ${\bf C}_0$ and ${\bf C}_1$, respectively. By definition of the Hamming distance, we obtain
\begin{equation} \label{eq:johnson_dh1}
	\dh({\bf C}_0, {\bf C}_1) = \dh({\bf C}'_0, {\bf C}'_1) + \dh(f({\bf u}, {\bf C}_0) , f({\bf u}, {\bf C}_1)).
\end{equation}
Since $f({\bf u}, {\bf C}_0) , f({\bf u}, {\bf C}_1) \in \mathcal{L}$, we have
$\dh(f({\bf u}, {\bf C}_0) , f({\bf u}, {\bf C}_1)) \leq 2\delta$ and (\ref{eq:johnson_dh1}) leads to $\dh({\bf C}'_0, {\bf C}'_1) \geq 2(d-\delta)$. 

For all ${\bf x} \in \gf(2)^n$ and all $\mathcal{D} \subseteq \gf(2)^n$, we let $\chi({\bf x}, \mathcal{D}) = 1$ if ${\bf x} \in \mathcal{D}$ and $\chi({\bf x}, \mathcal{D}) = 0$ otherwise. Hence for all ${\bf u} \in [m]^n$, $\sum_{{\bf C} \in \mathcal{C}} \chi(f({\bf u}, {\bf C}), \mathcal{L}) = |\{ {\bf C} \in \mathcal{C} : f({\bf u}, {\bf C}) \in \mathcal{L} \}|$ is the cardinality of a CWAC in $J((m-1)l,lw-v)^{\frac{n}{l}}$ with minimum distance at least $2(d-\delta)$, and hence 
\begin{equation} \label{eq:ja1}
	\sum_{{\bf C} \in \mathcal{C}} \chi(f({\bf u}, {\bf C}), \mathcal{L}) \leq A\left((m-1)l,\frac{n}{l}, lw-v,d-\delta\right)  
\end{equation}
for all ${\bf u} \in [m]^n$. Similarly, we have for all ${\bf C} \in \mathcal{C}$, $\sum_{{\bf u} \in [m]^n} \chi(f({\bf u}, {\bf C}), \mathcal{L}) = |\{ {\bf u} \in [m]^n : f({\bf u}, {\bf C}) \in \mathcal{L} \}|$. For any ${\bf L} \in J(l,v)^{\frac{n}{l}}$, it can be easily shown that $|\{ {\bf u} \in [m]^n : f({\bf u}, {\bf C}) = {\bf L}\}| = w^{\frac{nv}{l}} (m-w)^{n-\frac{nv}{l}}$, and hence
\begin{equation} \label{eq:ja2}
	\sum_{{\bf u} \in [m]^n} \chi(f({\bf u}, {\bf C}), \mathcal{L}) = \alpha \left(l,\frac{n}{l},v,2\delta \right) w^{\frac{nv}{l}} (m-w)^{n-\frac{nv}{l}}
\end{equation}
for all ${\bf C} \in \mathcal{C}$. Summing over all ${\bf u} \in [m]^n$ and all ${\bf C} \in \mathcal{C}$, while using (\ref{eq:ja1}) and (\ref{eq:ja2}), then finishes the proof.
\end{proof}

In practice, the bound in Proposition \ref{prop:johnson_generalized} needs to be minimized for all $l$, $v$, and $\delta$. However, in order to illustrate this bound, we display its values for the two extreme cases of $\delta$ in Corollary \ref{cor:johnson_generalized}. ???Vision of what these bounds mean: where do we puncture?

\begin{corollary} \label{cor:johnson_generalized}
We have
\begin{eqnarray} \nonumber %\label{eq:johnson_vector}
 A(m,n,w,d) &\leq& \left(\frac{m}{w} \right)^{\frac{nv}{l}} \left(\frac{m}{m-w} \right)^{n-\frac{nv}{l}} A \left((m-1)l,\frac{n}{l},lw-v,d \right),\\
 \nonumber %\label{eq:johnson_set}
 A(m,n,w,d) &\leq& \left( \frac{m^l}{{l \choose v} w^v(m-w)^{l-v}} \right)^{\frac{n}{l}} A \left((m-1)l,\frac{n}{l},lw-v,d-\frac{n}{l}\min\{v,l-v\}\right).
\end{eqnarray}
\end{corollary}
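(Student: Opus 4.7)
The plan is to derive the two inequalities by specialising Proposition \ref{prop:johnson_generalized} to the two endpoints of the admissible range $0 \leq \delta \leq \frac{n}{l}\min\{v,l-v\}$. No new combinatorial argument is required; the only nontrivial step is to evaluate $\alpha(l,\frac{n}{l},v,\delta)$ at each endpoint, after which both displayed bounds follow by substitution and algebraic simplification of the prefactor.

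For the first bound I would take $\delta = 0$. The paper records the special value $\alpha(m,n,w,0)=1$, hence $\alpha(l,\frac{n}{l},v,0)=1$, so Proposition \ref{prop:johnson_generalized} becomes
\begin{equation*}
A(m,n,w,d) \leq \left(\frac{m^{l}}{w^{v}(m-w)^{l-v}}\right)^{\!\frac{n}{l}} A\!\left((m-1)l,\tfrac{n}{l},lw-v,d\right).
\end{equation*}
Splitting the prefactor as $m^{n} = m^{nv/l}\cdot m^{n-nv/l}$ and pairing the factors with $w^{nv/l}$ and $(m-w)^{n-nv/l}$ respectively rewrites the coefficient as $\left(\tfrac{m}{w}\right)^{nv/l}\left(\tfrac{m}{m-w}\right)^{n-nv/l}$, matching the claim.

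For the second bound I would take $\delta = \frac{n}{l}\min\{v,l-v\}$, the largest value allowed by Proposition \ref{prop:johnson_generalized}. At this endpoint one must show $\alpha(l,\tfrac{n}{l},v,\tfrac{n}{l}\min\{v,l-v\}) = \binom{l}{v}^{n/l}$. The upper bound is trivial: any two elements of $J(l,v)$ agree on at least $|v-(l-v)| = l - 2\min\{v,l-v\}$ coordinates (when $2v\leq l$ this is $l-2v$; otherwise by the symmetry $\alpha(l,\cdot,v,\cdot)=\alpha(l,\cdot,l-v,\cdot)$ we may reduce to that case), so they are at Hamming distance at most $2\min\{v,l-v\}$, and across $n/l$ columns the total Hamming distance is at most $\tfrac{n}{l}\cdot 2\min\{v,l-v\}$. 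Matching this with the diameter convention $2\delta$ used throughout the proof of Proposition \ref{prop:johnson_generalized}, the whole of $J(l,v)^{n/l}$ is a valid antiCWAC at this diameter, giving the lower bound $\binom{l}{v}^{n/l}$.

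Plugging $\alpha(l,\tfrac{n}{l},v,\tfrac{n}{l}\min\{v,l-v\}) = \binom{l}{v}^{n/l}$ into Proposition \ref{prop:johnson_generalized} absorbs the binomial factor into the prefactor as $\binom{l}{v}^{-n/l}$ inside the $(\cdot)^{n/l}$, yielding directly the second stated inequality. The main obstacle, such as it is, is keeping the bookkeeping of the exponents $n/l$ straight in the prefactor; everything else is a direct specialisation.
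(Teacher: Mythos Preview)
Your proposal is correct and is exactly the approach the paper takes: as stated just before the corollary, the two bounds are obtained by specialising Proposition~\ref{prop:johnson_generalized} to the extreme values $\delta=0$ and $\delta=\tfrac{n}{l}\min\{v,l-v\}$, using the already-recorded special values $\alpha(m,n,w,0)=1$ and $\alpha(m,n,w,nw)=\binom{m}{w}^n$ (the latter applied to parameters $(l,\tfrac{n}{l},v)$ via the symmetry $\alpha(l,\cdot,v,\cdot)=\alpha(l,\cdot,l-v,\cdot)$). Your computation of $\alpha$ at the maximal endpoint and the algebraic rewriting of the prefactor are both correct.
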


%\begin{proof}
%Apply (\ref{eq:johnson_anticode}) for $\delta = 0$ and $\delta = \frac{n}{l} \min\{v,l-v\}$ to obtain (\ref{eq:johnson_vector}) and (\ref{eq:johnson_set}), respectively.
%\end{proof}

We finish this section by deriving the counterparts of the Gilbert and Hamming bounds for CWACs. First, we study the number of vectors in $J(m,w)^n$ at distance $d$ from a given vector in $J(m,u)^n$. We denote the set of all nonnegative integer sequences of length $n$ whose sum is equal to $d$ as $P_n(d)$.

\begin{lemma} \label{lemma:N}
The number of words in $J(m,u)^n$ at distance $d$ from a given word in $J(m,w)^n$ is given by $N(m,n,w,u,d) = \sum_{\pi \in P_n(d)} \prod_{i \in \pi} \nu(i)$, where $\nu(i) = {w \choose \frac{u+w-i}{2}} {n-w \choose \frac{u+i-w}{2}}$.
\end{lemma}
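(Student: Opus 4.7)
The plan is to exploit the fact that the Hamming distance on $J(m,w)^n$ decomposes additively across the $n$ columns. For any $\mathbf{c} = (c_1,\ldots,c_n) \in J(m,w)^n$ and any $\mathbf{x} = (x_1,\ldots,x_n) \in J(m,u)^n$, one has $\dh(\mathbf{c},\mathbf{x}) = \sum_{j=1}^n \dh(c_j, x_j)$. Consequently, the set of $\mathbf{x}$ at distance exactly $d$ from $\mathbf{c}$ partitions according to the tuple $\pi = (i_1,\ldots,i_n)$ of columnwise distances, where each $i_j \geq 0$ and $\sum_j i_j = d$; that is, $\pi$ ranges over $P_n(d)$.

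The second step is to count, for fixed $c_j \in J(m,w)$ and any prescribed column distance $i$, the number of $x_j \in J(m,u)$ with $\dh(c_j, x_j) = i$. Parameterizing such $x_j$ by the size $k = |\supp(c_j) \cap \supp(x_j)|$ of the overlap, one has $\dh(c_j, x_j) = (w-k) + (u-k) = w + u - 2k$, so $k$ is forced to be $(w+u-i)/2$. The number of choices is then $\binom{w}{k}$ for the positions in $\supp(c_j)$ kept as $1$'s in $x_j$, times $\binom{m-w}{u-k}$ for the new $1$-positions drawn from the complement, which is exactly $\nu(i) = \binom{w}{(u+w-i)/2}\binom{m-w}{(u+i-w)/2}$ (with the convention that $\binom{a}{b} = 0$ when $b$ is not a nonnegative integer or when $b > a$, which silently handles infeasible values of $i$ such as $|u-w| > i$ or $i > u+w$).

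Combining the two steps, the total count is obtained by choosing each column independently: $N(m,n,w,u,d) = \sum_{\pi \in P_n(d)} \prod_{j=1}^n \nu(i_j)$, which is the stated formula. Since the entire argument is a direct translation of the definition of Hamming distance into a product-of-compositions count, the only delicate point is verifying that $\nu(i)$ correctly counts the columnwise vectors (including the automatic vanishing at infeasible parities and ranges of $i$); there is no genuine obstacle. The proof is therefore short enough to be presented inline, and I would not expect to need any auxiliary lemma beyond the additivity of $\dh$ across columns.
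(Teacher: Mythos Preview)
Your argument is correct and rests on the same decomposition as the paper's proof: the Hamming distance splits additively over the $n$ columns, and the per-column count of weight-$u$ vectors at distance $i$ from a fixed weight-$w$ vector is exactly $\nu(i)$. The only difference is presentational: the paper packages this as an induction on $n$ via the recursion $N(m,a+1,w,u,d)=\sum_{e=0}^{d} N(m,a,w,u,e)\,\nu(d-e)$, whereas you give the direct product-over-compositions count in one step; both arrive at the same formula for the same reason.
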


\begin{proof}
The proof goes by induction on $n$. First, for $n=1$, $N(m,1,w,u,d)$ is given by $\nu(d)$, and hence the claim is true. Let us now assume it is true for $n = a$. It is easily shown that
\begin{eqnarray} \label{eq:N_recursion}
	N(m,a+1,w,u,d) &=& \sum_{e=0}^d N(m,a,w,u,e) \nu(d-e)\\
	\nonumber
	&=& \sum_{e=0}^d \nu(d-e) \sum_{\pi \in P_a(e)} \prod_{i \in \pi} \nu(i).
\end{eqnarray}
For any $\pi \in P_a(e)$, we have $\pi' = \pi \cup \{d-e\} \in P_{a+1}(d)$; conversely, any $\pi' \in P_{a+1}(d)$ can be expressed as $\pi \cup \{d-e\}$, where $\pi \in P_a(e)$. Therefore, we obtain $N(m,a+1,w,u,d) = \sum_{\pi' \in P_{a+1}(d)} \prod_{i \in \pi'} \nu(i)$.
\end{proof}

Although the value of the expression in Lemma \ref{lemma:N} cannot be computed in general, as it requires an exponential number of operations, it can be determined for relatively small values of $d$ and $n$. Also, (\ref{eq:N_recursion}) provides a recursive way to compute $N(m,n,w,u,d)$.

We now determine the counterparts of the Gilbert and Hamming bounds for CWACs, whose proofs are straightforward and hence omitted.

\begin{proposition}[Gilbert and Hamming bounds for CWACs] \label{prop:gilbert_hamming}
For all parameter values,
\begin{equation} \nonumber %\label{eq:gilbert_hamming}
 \frac{{m \choose w}^n}{\sum_{i=0}^{d-1} N(m,n,w,w,2i)} \leq A(m,n,w,d) \leq 
 \min_{nw-d+1 \leq u \leq nw+d-1} \left\{ \frac{{m \choose u}^n}{\sum_{i=0}^{d-1} N(m,n,w,u,i)} \right\}.
\end{equation}
\end{proposition}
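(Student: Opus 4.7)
The plan is to adapt the classical sphere-covering and sphere-packing arguments to the CWAC setting, with Lemma \ref{lemma:N} supplying the ball sizes. Both bounds will reduce to a single counting inequality once the appropriate ball is identified; the author's remark that the proofs are ``straightforward'' seems apt.

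For the Gilbert lower bound, I would start with a CWAC $\mathcal{C} \subseteq J(m,w)^n$ of minimum Hamming distance $2d$ and maximum cardinality $A(m,n,w,d)$. Maximality forces every ${\bf M} \in J(m,w)^n$ to lie at Hamming distance at most $2d-1$ from some codeword, for otherwise ${\bf M}$ could be adjoined to $\mathcal{C}$ without decreasing the minimum distance. Since any two elements of $J(m,w)^n$ have the same total weight $nw$, their Hamming distance is automatically even, and the threshold sharpens to $2(d-1)$. By Lemma \ref{lemma:N}, the number of words in $J(m,w)^n$ within Hamming distance $2(d-1)$ of a fixed word equals $\sum_{i=0}^{d-1} N(m,n,w,w,2i)$, so the union bound
$$ A(m,n,w,d) \cdot \sum_{i=0}^{d-1} N(m,n,w,w,2i) \;\geq\; |J(m,w)^n| \;=\; \binom{m}{w}^n $$
yields the claim.

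For the Hamming upper bound, I would fix any CWAC $\mathcal{C}$ with minimum Hamming distance $2d$ and any $u$ in the stated range. The Hamming balls of radius $d-1$ around codewords are pairwise disjoint in $\gf(2)^{mn}$ (because the minimum distance is $2d$), hence their intersections with $J(m,u)^n$ are pairwise disjoint subsets of $J(m,u)^n$. Lemma \ref{lemma:N} identifies the size of each such intersection as $\sum_{i=0}^{d-1} N(m,n,w,u,i)$, so
$$ |\mathcal{C}| \cdot \sum_{i=0}^{d-1} N(m,n,w,u,i) \;\leq\; |J(m,u)^n| \;=\; \binom{m}{u}^n, $$
and minimizing the right-hand side over admissible $u$ completes the proof.

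I do not anticipate any real obstacle: the combinatorial content has already been discharged by Lemma \ref{lemma:N}. The only point requiring care is that the range of $u$ must be chosen so that the ball of Hamming radius $d-1$ around a codeword actually meets $J(m,u)^n$; for $u$ outside that range the intersection is empty, the denominator vanishes, and the bound is vacuous. The stated range is exactly what rules this out, which is why it appears in the min.
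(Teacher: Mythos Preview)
Your argument is exactly the standard sphere-covering/sphere-packing proof the paper has in mind; indeed the paper itself declares the proofs ``straightforward and hence omitted,'' and what you wrote is precisely that straightforward argument, with Lemma~\ref{lemma:N} supplying the ball sizes. One small caveat: your final sentence overstates slightly, since for $n>1$ the minimum Hamming distance between $J(m,w)^n$ and $J(m,u)^n$ is $n|w-u|$ rather than $|nw-u|$, so the stated range on $u$ is not literally the non-vacuity range---but this does not affect the validity of the bound for any $u$ with nonzero denominator, and hence does not affect your proof.
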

%The proof of these bounds are similar to their counterparts for Hamming metric codes and are hence omitted.

\subsection{Relations to other classes of codes} \label{sec:relations}

We first relate $A(m,n,w,d)$ to the maximum cardinalities of constant-weight codes and nonrestricted Hamming metric codes.  Proposition \ref{prop:trivial_relations} relates $A(m,n,w,d)$ to $B(n,w,d)$ and $C(q,n,d)$ for some particular values.

\begin{proposition} \label{prop:trivial_relations}
For all $d$, we have $A(m,n,1,d) = C(m,n,d)$ and $A(m,1,w,d) = B(m,w,d)$.
\end{proposition}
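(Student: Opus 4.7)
The plan is to verify both equalities directly from the definitions via a natural bijection.

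For the second identity $A(m,1,w,d) = B(m,w,d)$, I would simply note that a CWAC in $J(m,w)^n$ with $n=1$ is by definition a nonempty subset of $J(m,w)^1 = J(m,w)$, which is precisely a binary constant-weight code of length $m$ and weight $w$. The minimum Hamming distance required by the CWAC definition ($2d$) matches the convention used in defining $B(m,w,d)$, so maximizing over both classes gives the same number.

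For the first identity $A(m,n,1,d) = C(m,n,d)$, the idea is to exhibit a bijection $\varphi: J(m,1)^n \to [m]^n$ that sends a tuple of unit vectors to the tuple of positions of their unique $1$'s, so that $\varphi({\bf c})_i = j$ iff the $j$-th entry of the $i$-th column of ${\bf c}$ equals $1$. The critical observation is then a distance relation: for any two coordinates ${\bf c}_i, {\bf c}'_i \in J(m,1)$, the binary Hamming distance $\dh({\bf c}_i, {\bf c}'_i)$ equals $0$ if the unique $1$'s coincide and $2$ otherwise, hence exactly $2 \cdot [\varphi({\bf c})_i \neq \varphi({\bf c}')_i]$. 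Summing over the $n$ coordinates gives $\dh({\bf c}, {\bf c}') = 2 \dh(\varphi({\bf c}), \varphi({\bf c}'))$, where the right-hand distance is the Hamming distance over the alphabet $[m]$.

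Consequently, a CWAC in $J(m,1)^n$ with minimum Hamming distance $2d$ corresponds under $\varphi$ to a code in $[m]^n$ with minimum Hamming distance $d$, and vice versa. Taking maxima on both sides yields $A(m,n,1,d) = C(m,n,d)$. There is no real obstacle here: the only thing to be careful about is the factor of two in the distance, which is absorbed by the convention that CWAC minimum distance is parametrized as $2d$ while $C(q,n,d)$ uses the raw Hamming distance $d$. Both equalities are essentially definitional once the bijection and the distance-doubling are noted, which is likely why the authors state the proposition without a proof.
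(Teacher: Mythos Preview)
Your proposal is correct and follows exactly the natural argument the paper has in mind; indeed, the paper omits the proof entirely as ``straightforward,'' and your bijection $\varphi:J(m,1)^n\to[m]^n$ together with the observation $\dh({\bf c},{\bf c}')=2\,\dh(\varphi({\bf c}),\varphi({\bf c}'))$ is precisely the definitional verification the authors allude to. There is nothing to add.
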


%\begin{proof}
%When $w=1$, the support of vectors in $J(m,1)$ forms a bijection with $[m]$, which can be expanded into a bijection between $J(m,w)^n$ and $[m]^n$ by applying it columnwise. For two vectors ${\bf A}, {\bf B} \in J(m,w)^n$, we have $\dh({\bf A}, {\bf B}) = 2\dh(f({\bf A}), f({\bf B}))$. Therefore, $\mathcal{C}$ is a CWAC in $J(m,w)^n$ with minimum distance $d$ if and only if $f(\mathcal{C})$ of length $n$ over an alphabet of size $m$ with minimum Hamming distance $d$ and cardinality $|f(\mathcal{C})| = |\mathcal{C}|$.
%
%When $n=1$, then any CWAC can be similarly mapped to a constant-weight code in $J(m,w)$ with the same minimum Hamming distance.
%\end{proof}

The proof is straightforward and hence omitted. Proposition \ref{prop:trivial_relations} indicates that the problem of finding the maximum cardinality of a CWAC generalizes both problems of finding optimal CWCs and optimal nonrestricted Hamming metric codes. Proposition \ref{prop:C<A<B}, on the other hand, views CWACs as a special class of constant-weight codes.

\begin{proposition}[CWACs as a subclass of constant-weight codes] \label{prop:C<A<B}
For all $a$, $m'$, and $w'$ such that $am' \leq m$ and $aw' \leq w \leq aw' + m-aw'$, we have $A(m,n,w,d) \geq A \left( m',an,w', d\right)$. Therefore, for all $w \leq m$, we have
\begin{equation} \label{eq:C<A<B}
 C \left( \left\lfloor \frac{m}{w} \right\rfloor, nw, d \right) \leq A(m,n,w,d) \leq B(mn,nw,d).
\end{equation}
\end{proposition}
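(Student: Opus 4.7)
The plan is to establish the main inequality $A(m,n,w,d) \geq A(m',an,w',d)$ by an explicit injective distance-preserving embedding, then deduce the two bounds in (\ref{eq:C<A<B}) as easy consequences. Let $\mathcal{C}' \subseteq J(m',w')^{an}$ be a CWAC of cardinality $A(m',an,w',d)$ and minimum Hamming distance $2d$, each of whose codewords I view as an $m' \times (an)$ binary matrix. I would partition the $an$ columns of each such matrix $\mathbf{C}'$ into $n$ consecutive blocks of $a$ columns, stack each block vertically into a single column of length $am'$ and weight $aw'$, and then append to the bottom of every such column the same fixed padding vector $\mathbf{p} \in \gf(2)^{m-am'}$ of weight $w-aw'$. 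The hypotheses $am' \leq m$ and $aw' \leq w \leq aw'+m-am'$ guarantee that such a $\mathbf{p}$ exists, so the resulting matrix $\mathbf{C}$ lies in $J(m,w)^n$.

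Next I would verify the two properties needed. The map $\mathbf{C}' \mapsto \mathbf{C}$ is injective since $\mathbf{C}'$ is recovered by stripping the last $m-am'$ rows from $\mathbf{C}$ and reversing the stacking operation, both of which are deterministic. Moreover, the Hamming distance between two images equals the Hamming distance between the corresponding preimages: vertical stacking is a permutation of coordinates, and the common padding rows contribute $0$ to the distance. Hence the image is a CWAC in $J(m,w)^n$ of cardinality $|\mathcal{C}'|$ and minimum distance at least $2d$, giving the desired inequality.

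With the first part in hand, the right-hand inequality of (\ref{eq:C<A<B}) is proved by flattening: any codeword in $J(m,w)^n$ is a matrix in $\gf(2)^{m \times n}$ of Hamming weight $nw$, which, read column by column as a vector of length $mn$, is a constant-weight codeword of length $mn$, weight $nw$ and the same pairwise Hamming distances; hence $A(m,n,w,d) \leq B(mn,nw,d)$. For the left-hand inequality, I would specialize the first part to $m' = \lfloor m/w \rfloor$, $w' = 1$, $a = w$: then $am' = w\lfloor m/w \rfloor \leq m$, $aw' = w = w$, so the hypotheses hold (with zero padding weight), and by Proposition \ref{prop:trivial_relations} we have $A(m',an,w',d) = A(\lfloor m/w\rfloor, wn, 1, d) = C(\lfloor m/w\rfloor, nw, d)$.

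There is no real obstacle here; the whole argument is a concrete construction plus one combinatorial re-indexing. The only subtlety worth checking carefully is the numerology of the hypotheses, ensuring both that $aw'\leq w$ (so that padding weight is nonnegative) and that $w-aw'\leq m-am'$ (so that there is enough room for the padding), which is where the stated range of $w$ is used.
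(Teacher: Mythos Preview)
Your proposal is correct and follows essentially the same route as the paper: the paper also builds the $k$-th column of $f(\mathbf{C})\in J(m,w)^n$ by stacking the $a$ columns $ak,\ldots,ak+a-1$ of $\mathbf{C}\in J(m',w')^{an}$ and then appending $w-aw'$ ones and $m-am'-(w-aw')$ zeros, observing that $f$ is a distance-preserving injection. Your derivation of the two inequalities in (\ref{eq:C<A<B}) via flattening (for the upper bound) and the specialization $m'=\lfloor m/w\rfloor$, $w'=1$, $a=w$ together with Proposition~\ref{prop:trivial_relations} (for the lower bound) is exactly what the paper's ``follows immediately'' amounts to.
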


\begin{proof}
For all ${\bf C} \in J(m',w')^{an}$, let the $k$-th column of $f({\bf C}) \in J(m,w)^n$ consist of the columns $ak, ak+1, \ldots, (a+1)k-1$ of ${\bf C}$, of $w-aw'$ ones, and of $m-am'-w+aw'$ zeros. Then $f$ is an injection from $J(m',w')^{an}$ to $J(m,w)^n$ which preserves the distance. We hence obtain the first claim; (\ref{eq:C<A<B}) follows immediately.
\end{proof}

Proposition \ref{prop:A_upper_bound} refines the upper bound in (\ref{eq:C<A<B}) by providing an explicit extension of a CWAC by adding several CWACs of different dimensions.

\begin{proposition}[Refined upper bound for CWACs] \label{prop:A_upper_bound}
For all $l \geq 1$ and any decreasing sequence of integers $\{n_0 = n, n_1, \ldots, n_{l-1}\}$ satisfying $n_i | nw$ and $\frac{nw}{m} \leq n_i \leq \left(1- \frac{d}{nw} \right) n_{i-1}$ for all $1 \leq i \leq l-1$, we have
\begin{equation} \nonumber %\label{eq:A_upper_bound}
 \sum_{i=0}^{l-1} B(n,n_i,d) A(m,n_i,w_i,d) \leq B(mn,nw,d),
\end{equation}
where $w_i = \frac{nw}{n_i}$.
\end{proposition}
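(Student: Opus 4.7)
The plan is to realize the left-hand side as the cardinality of an explicit binary constant-weight code in $J(mn,nw)$ with minimum distance $2d$, whose existence then immediately upper-bounds it by $B(mn,nw,d)$. Throughout, I view an element of $\gf(2)^{mn}$ as a concatenation of $n$ blocks of length $m$. For each index $i=0,\ldots,l-1$, let $\mathcal{B}_i\subseteq J(n,n_i)$ be an optimal binary constant-weight code of minimum distance $2d$ and cardinality $B(n,n_i,d)$, and let $\mathcal{A}_i\subseteq J(m,w_i)^{n_i}$ be an optimal CWAC of minimum distance $2d$ and cardinality $A(m,n_i,w_i,d)$. (The hypotheses $n_i\mid nw$ and $nw/m\le n_i\le n$ guarantee that $w_i=nw/n_i$ is an integer in $[w,m]$, so both codes are well defined.)

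I will define an injection $\phi_i:\mathcal{B}_i\times\mathcal{A}_i\to J(mn,nw)$ as follows: given $(\mathbf{b},\mathbf{C})$, place the $n_i$ columns of $\mathbf{C}$, in order, into the blocks of $\phi_i(\mathbf{b},\mathbf{C})$ indexed by $\supp(\mathbf{b})$, and fill the remaining $n-n_i$ blocks with zeros. The resulting vector has weight $n_i\cdot w_i=nw$ as required. Setting $\mathcal{C}=\bigcup_{i=0}^{l-1}\phi_i(\mathcal{B}_i\times\mathcal{A}_i)$, the claim reduces to showing that any two distinct elements of $\mathcal{C}$ are at Hamming distance at least $2d$, as this also forces disjointness of the $\phi_i$'s images and hence $|\mathcal{C}|=\sum_i B(n,n_i,d)\,A(m,n_i,w_i,d)$.

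The intra-family check (both codewords come from $\phi_i$) splits in two: if the $\mathcal{B}_i$-parts agree, the distance equals $\dh(\mathbf{C},\mathbf{C}')\ge 2d$ by choice of $\mathcal{A}_i$; otherwise, letting $s=|\supp(\mathbf{b})\cap\supp(\mathbf{b}')|$, the distance is at least $2(n_i-s)w_i\ge 2d\cdot w_i\ge 2d$ since $\dh(\mathbf{b},\mathbf{b}')=2(n_i-s)\ge 2d$. The inter-family check is the crux and will be the main obstacle. Take $i<i'$ and decompose the distance according to whether a block index lies in $\supp(\mathbf{b})\setminus\supp(\mathbf{b}')$ (contributing $w_i$ each), in $\supp(\mathbf{b}')\setminus\supp(\mathbf{b})$ (contributing $w_{i'}$ each), or in the intersection (where both blocks are nonzero with weights $w_i$ and $w_{i'}$, giving at least $w_{i'}-w_i$ per block). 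With $s$ again the intersection size, this yields
\begin{equation*}
\dh\bigl(\phi_i(\mathbf{b},\mathbf{C}),\phi_{i'}(\mathbf{b}',\mathbf{C}')\bigr)\ \ge\ (n_i-s)w_i+(n_{i'}-s)w_{i'}+s(w_{i'}-w_i)\ =\ 2nw-2sw_i.
\end{equation*}
Since the sequence is decreasing, $s\le n_{i'}\le(1-d/(nw))\,n_{i'-1}\le(1-d/(nw))\,n_i$, so $sw_i\le nw-d$ and the distance is at least $2d$. This finishes the argument; the first displayed inequality of the proposition follows immediately, and the second is just the chain bounded above by $B(mn,nw,d)$.
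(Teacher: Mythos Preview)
Your proof is correct and follows essentially the same construction and case analysis as the paper: build codewords by placing the $n_i$ columns of an optimal CWAC into the blocks indexed by the support of an optimal constant-weight codeword, then verify the minimum distance via the same three cases (same $\mathbf{b}$, same $i$ but different $\mathbf{b}$, different $i$) with the key inter-family estimate $2nw-2sw_i\ge 2d$ coming from $s\le n_{i'}\le(1-d/(nw))n_i$. One tiny slip: there is only one displayed inequality in the proposition, so the closing remark about ``the second'' is superfluous.
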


\begin{proof}
We shall construct a constant-weight code in $J(mn,nw)$ with minimum distance $d$ and cardinality $\sum_{i=0}^{l-1} B(n,n_i,d) A(m,n_i,w_i,d)$. For all $i$, let $\mathcal{C}^i = \{ {\bf C}^{i,k} \}$ be a CWAC in $J(m,w_i)^{n_i}$ with minimum distance $d$ and cardinality $A(m,n_i,w_i,d)$ and let $\mathcal{L}^i = \{L^{i,j}\}$ be a constant-weight code in $J(n,n_i)$ with minimum distance $d$ and cardinality $B(n,n_i,d)$. Also for any $i,j$, we denote the nonzero coordinates of $L^{i,j}$ as $L^{i,j}(a)$ for $0 \leq a \leq n_i-1$. Then for all $k$, we construct the vector ${\bf X}^{i,j,k}$ as follows. For all $0 \leq a \leq n_i-1$, the $L^{i,j}(a)$-th column of ${\bf X}^{i,j,k}$ is given by the $a$-th column of ${\bf C}^{i,k}$, and all the other $n-n_i$ columns are set to zero. Then the code $\mathcal{C} = \left\{ {\bf X}^{i,j,k} \right\}$ is a constant-weight code in $J(mn,nw)$. 

We now show that $\mathcal{C}$ has minimum distance $d$ by considering two distinct codewords ${\bf X}^{i,j,k}$ and ${\bf X}^{i',j',k'}$ in $\mathcal{C}$. First, if $i \neq i'$, without loss of generality $i < i'$ and hence $n_i > n_{i'}$, we have $\dh({\bf X}^{i,j,k}, {\bf X}^{i',j',k'}) = \sum_{a = 0}^{n-1} \dh({\bf x}_a^{i,j,k}, {\bf x}_a^{i',j',k'})$, where ${\bf x}_a$ denotes the $a$-th column of ${\bf X}$. Therefore,
\begin{eqnarray}
 \nonumber
 \dh({\bf X}^{i,j,k}, {\bf X}^{i',j',k'}) &=& \sum_{a \in L^{i,j} \cap L^{i',j'}} \dh({\bf x}_a^{i,j,k}, {\bf x}_a^{i',j',k'}) + 
 \sum_{a \in L^{i,j} \backslash L^{i',j'}} \dh({\bf x}_a^{i,j,k}, {\bf 0}) + \sum_{a \in L^{i',j'} \backslash L^{i,j}} \dh({\bf 0}, {\bf x}_a^{i,j,k})\\
 \nonumber
 &\geq& \left| L^{i,j} \cap L^{i',j'}\right| (w_{i'} - w_i) + \left| L^{i,j} \backslash L^{i',j'} \right| w_i + \left| L^{i',j'} \backslash L^{i,j} \right| w_{i'}\\
 \nonumber
 &=& 2w_i\left(n_i- \left| L^{i,j} \cap L^{i',j'} \right| \right)\\
 \nonumber
 &\geq& 2w_i(n_i - n_{i'})\\
 \label{eq:A_ub1}
 &\geq& 2d,
\end{eqnarray}
where (\ref{eq:A_ub1}) follows the definition of the $\{n_i\}$ sequence. Second, if $i=i'$ and $j \neq j'$, then $\dh({\bf X}^{i,j,k}, {\bf X}^{i,j',k'}) \geq \dh(L^{i,j}, L^{i,j'}) \geq 2d$ by the minimum distance of $\mathcal{L}^i$. Third, if $i=i'$, $j=j'$, and $k \neq k'$, then $\dh({\bf X}^{i,j,k}, {\bf X}^{i,j,k'}) = \dh({\bf C}^{i,k}, {\bf C}^{i,k'}) \geq 2d$ by the minimum distance of $\mathcal{C}^i$. Thus $\mathcal{C}$ has minimum distance $2d$ and cardinality $\sum_{i=0}^{l-1} B(n,n_i,d) A(m,n_i,w_i,d) \leq B(mn,nw,d)$.
\end{proof}

Proposition \ref{prop:A_lower_bound} below gives another means to tighten the upper bound in Proposition \ref{prop:C<A<B}.

\begin{proposition} \label{prop:A_lower_bound}
For any $a \geq 1$, we have $A(am,n,w,d) \geq C \left(a,n,\left\lceil \frac{d}{w} \right\rceil \right) A(m,n,w,d)$. Furthermore, if $a|w$ and $d \leq nw \frac{a-1}{a}$, then
\begin{equation} \nonumber %\label{eq:A_lower_bound}
	A(am,n,w,d) \geq C \left(a,n,\left\lceil \frac{d}{w} \right\rceil \right) A(m,n,w,d) + A\left( m,an,\frac{w}{a},d \right).
\end{equation}
\end{proposition}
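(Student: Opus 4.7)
The plan is to view each row index in $\{0,\ldots,am-1\}$ as belonging to one of $a$ consecutive blocks of $m$ rows, and to produce the claimed CWAC by combining two disjoint families, one for each term in the sum.

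For the first (and only) term needed for the weaker bound, I would take optimal codes $\mathcal{C}=\{\mathbf{C}^k\}\subseteq J(m,w)^n$ of size $A(m,n,w,d)$ and $\mathcal{D}=\{\mathbf{u}^j\}\subseteq [a]^n$ of size $C(a,n,\lceil d/w\rceil)$, and assign to each pair $(\mathbf{u}^j,\mathbf{C}^k)$ a codeword $\mathbf{X}^{j,k}\in J(am,w)^n$ whose $i$-th column places the $i$-th column of $\mathbf{C}^k$ inside block number $u^j_i$, with zeros in the other blocks. When $\mathbf{u}^j=\mathbf{u}^{j'}$ the construction is a column-wise isometry, so the pairwise distance equals $\dh(\mathbf{C}^k,\mathbf{C}^{k'})\geq 2d$; when $\mathbf{u}^j\neq \mathbf{u}^{j'}$, every column $i$ with $u^j_i\neq u^{j'}_i$ places its $w$ ones in disjoint blocks and contributes exactly $2w$ to the distance, for a total of at least $2w\,\dh(\mathbf{u}^j,\mathbf{u}^{j'})\geq 2w\lceil d/w\rceil\geq 2d$. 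This already yields the first claim.

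For the refined bound, I would adjoin a second family obtained from an optimal $\mathcal{E}\subseteq J(m,w/a)^{an}$ of size $A(m,an,w/a,d)$ by grouping its $an$ columns into $n$ consecutive batches of size $a$ and stacking each batch vertically into one column of $J(am,w)^n$. Because $a\mid w$, each resulting column has weight exactly $a\cdot (w/a)=w$; the map is column-wise an isometry, so mutual distances within this family remain at least $2d$.

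The main obstacle is controlling the cross-distance between an ``old'' codeword $\mathbf{X}^{j,k}$ and a ``new'' codeword $\mathbf{Y}^l$. In each column, $\mathbf{X}^{j,k}$ concentrates its $w$ ones in a single block while $\mathbf{Y}^l$ places exactly $w/a$ ones per block. Inside the concentrated block the per-column distance is at least $w+w/a-2(w/a)=w-w/a$ (the overlap being at most $w/a$), and each of the remaining $a-1$ blocks contributes $w/a$, yielding a per-column distance of at least $2w(a-1)/a$ and thus a total of at least $2nw(a-1)/a\geq 2d$ by the hypothesis $d\leq nw(a-1)/a$. In particular this distance is strictly positive, so the two families are disjoint and their cardinalities add to give the stated bound.
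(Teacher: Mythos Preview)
Your proposal is correct and follows essentially the same route as the paper's own proof: the same block-placement construction from a pair $(\mathbf{u}^j,\mathbf{C}^k)$, the same embedding of $J(m,w/a)^{an}$ into $J(am,w)^n$ by vertical stacking, and the same per-column cross-distance estimate $(w-w/a)+(a-1)\,w/a=2w(a-1)/a$. The only cosmetic difference is that the paper splits the first family's distance analysis as $k\neq k'$ versus $k=k',\,j\neq j'$ (using the identity $\dh(\mathbf{X}^{j,k},\mathbf{X}^{j',k'})=2w\,\dh(\mathbf{u}^j,\mathbf{u}^{j'})+\sum_{i:u^j_i=u^{j'}_i}\dh(\mathbf{C}^k_i,\mathbf{C}^{k'}_i)$), whereas you split as $\mathbf{u}^j=\mathbf{u}^{j'}$ versus $\mathbf{u}^j\neq\mathbf{u}^{j'}$; both are valid.
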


\begin{proof}
Let $\mathcal{C}$ be a CWAC in $J(m,w)^n$ with minimum distance $d$ and cardinality $A(m,n,w,d)$ and let $\mathcal{V}$ be a code in $[a]^n$ with minimum distance $\left\lceil \frac{d}{w} \right\rceil$ and cardinality $C \left(a,n,\left\lceil \frac{d}{w} \right\rceil \right)$. We construct the code $\mathcal{D} \subseteq J(am,w)^n$ as follows. For any ${\bf C}^i \in \mathcal{C}$, ${\bf v}^j \in \mathcal{V}$, and all $0 \leq k \leq n-1$, let ${\bf d}^{i,j}_k$ be the vector in $\gf(2)^{am}$ with $a$ columns of length $m$ where only the $v^j_k$-th column is nonzero and given by the $k$-th column of ${\bf C}^i$. Then the codeword ${\bf D}^{i,j} \in \mathcal{D}$ is defined to be ${\bf D}^{i,j} = ({\bf d}^{i,j}_0, {\bf d}^{i,j}_1, \ldots, {\bf d}^{i,j}_{n-1})$.

We now show that $\mathcal{D}$ has minimum distance at least $d$ by considering ${\bf D}^{i,j}$ and ${\bf D}^{i',j'}$. First, for any pair of codewords, we have 
\begin{equation} \label{eq:Alb1}
	\dh({\bf D}^{i,j}, {\bf D}^{i',j'}) = \sum_{k=0}^{n-1} \dh({\bf d}^{i,j}_k, {\bf d}^{i',j'}_k)
	%= \sum_{k:v^k_j \neq v^k_{j'}} 2\wh({\bf C}^i_k) + \sum_{k:v^k_j = v^k_{j'}} \dh({\bf C}^i_k, {\bf C}^{i'}_k)
	= 2w \dh({\bf v}_j, {\bf v}_{j'}) + \sum_{k:v^k_j = v^k_{j'}} \dh({\bf C}^i_k, {\bf C}^{i'}_k).
\end{equation}
If $i \neq i'$, (\ref{eq:Alb1}) leads to $\dh({\bf D}^{i,j}, {\bf D}^{i',j'}) \geq \dh({\bf C}^i, {\bf C}^{i'}) \geq 2d$. Also, if $i=i'$ and $j \neq j'$, (\ref{eq:Alb1}) yields $\dh({\bf D}^{i,j}, {\bf D}^{i',j'}) \geq 2w \dh({\bf v}_j, {\bf v}_{j'}) \geq 2d$. Therefore $\mathcal{D}$ has minimum distance at least $2d$. As a corollary, the cardinality of $\mathcal{D}$ is given by $|\mathcal{D}| = |\mathcal{C}||\mathcal{V}|$.

We now suppose $a|w$ and $d \leq nw \frac{a-1}{a}$, and we let $\mathcal{E}$ be a CWAC in $J(m,\frac{w}{a})^{an}$ with minimum distance $2d$ and cardinality $A(m,an,\frac{w}{a},d)$. Since any codeword in $\mathcal{D}$ has $(a-1)n$ all-zero columns, and that the remaining ones have weight $w$, we have $\dh(\mathcal{E},\mathcal{D}) \geq (a-1)n\frac{w}{a} + n\left(w-\frac{w}{a}\right) \geq 2d$. Thus the code $\mathcal{E} \cup \mathcal{D}$ has minimum distance $2d$ and cardinality $|\mathcal{E}|+ |\mathcal{D}|$.
\end{proof}

We also determine a close relation between CWACs of high minimum distance and constant-weight codes.

\begin{proposition} \label{prop:A<B}
For all $m$, $n$, $w$, and $d = (n-1)w + r$ with $0 < r \leq w$, $A(m,n,w,d) \leq B(m,w,r)$.
\end{proposition}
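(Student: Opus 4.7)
The plan is to build an injection from a maximum CWAC into a constant-weight code in $J(m,w)$ with minimum distance $2r$, by projection onto a single column. The key observation is that when the required distance $2d = 2(n-1)w + 2r$ is so large, the column-wise intersections between any two codewords must be so small that even a single column distinguishes them with distance at least $2r$.

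First I would take two distinct codewords $\mathbf{M}, \mathbf{N} \in \mathcal{C}$, write $\mathbf{M} = (\mathbf{m}_0,\ldots,\mathbf{m}_{n-1})$, $\mathbf{N} = (\mathbf{n}_0,\ldots,\mathbf{n}_{n-1})$, and decompose
\[
  \dh(\mathbf{M},\mathbf{N}) = \sum_{i=0}^{n-1} \dh(\mathbf{m}_i,\mathbf{n}_i) = \sum_{i=0}^{n-1} \bigl(2w - 2|\supp(\mathbf{m}_i)\cap\supp(\mathbf{n}_i)|\bigr) = 2nw - 2\sum_{i=0}^{n-1} x_i,
\]
where $x_i = |\supp(\mathbf{m}_i)\cap\supp(\mathbf{n}_i)|$. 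Using $\dh(\mathbf{M},\mathbf{N}) \geq 2d = 2(n-1)w + 2r$ gives the single inequality $\sum_{i=0}^{n-1} x_i \leq w - r$.

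Next I would exploit this inequality component-wise. Since each $x_i \geq 0$ and their sum is at most $w - r < w$, every individual $x_i$ satisfies $x_i \leq w - r < w$. In particular, $x_0 < w$ forces $\mathbf{m}_0 \neq \mathbf{n}_0$, and moreover $\dh(\mathbf{m}_0,\mathbf{n}_0) = 2w - 2x_0 \geq 2r$. Therefore the projection $\pi : \mathcal{C} \to J(m,w)$ sending $\mathbf{C} \mapsto \mathbf{c}_0$ is injective, and its image is a constant-weight code in $J(m,w)$ with minimum Hamming distance at least $2r$. Hence $|\mathcal{C}| = |\pi(\mathcal{C})| \leq B(m,w,r)$, which is exactly the claim.

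There is no real obstacle: the only substantive step is the arithmetic translation of the distance constraint into the bound $\sum_i x_i \leq w-r$, after which projecting onto an arbitrary fixed column (the first, say) automatically yields a valid constant-weight code of the required minimum distance. The hypothesis $0 < r \leq w$ is used precisely to guarantee $w - r < w$ (so the projection is injective) and $r \geq 1$ (so the target code has a meaningful minimum distance).
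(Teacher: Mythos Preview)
Your proof is correct and follows essentially the same approach as the paper's: both argue that because each column contributes at most $2w$ to the total Hamming distance, the constraint $\dh(\mathbf{M},\mathbf{N}) \geq 2(n-1)w + 2r$ forces every single column to satisfy $\dh(\mathbf{m}_i,\mathbf{n}_i) \geq 2r$, so projection onto any fixed column yields a constant-weight code in $J(m,w)$ of minimum distance at least $2r$. The only cosmetic difference is that you phrase the column bound via the intersection sizes $x_i$ and the inequality $\sum_i x_i \leq w-r$, whereas the paper states the column-distance lower bound directly.
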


\begin{proof}
Let $\mathcal{C}$ be a CWAC in $J(m,w)^n$ with minimum distance $d > (n-1)w$ and cardinality $A(m,n,w,d)$. For all ${\bf C}^0, {\bf C}^1 \in \mathcal{C}$ and all $0 \leq i \leq n-1$, we have $\dh({\bf C}^0_i, {\bf C}^1_i) \geq 2r$ and hence $\mathcal{C}_i$ is a constant-weight code in $J(m,w)$ with minimum distance no less than $2r$ and cardinality $|\mathcal{C}|$. Therefore, we have $A(m,n,w,d) \leq B(m,w,r)$.
\end{proof}

The following upper bound can be viewed as the counterpart of the concatenation construction in Section \ref{sec:concatenation}.

\begin{proposition} \label{prop:A<C}
For all $m$, $n$, $w$, and $d$, $A(m,n,w,d) \leq C \left({m \choose w}, n, \left\lceil \frac{d}{w} \right\rceil \right)$.
\end{proposition}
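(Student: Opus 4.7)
The plan is to view each codeword of a CWAC as a string over the symbol alphabet $J(m,w)$, which has cardinality $\binom{m}{w}$, and then bound the Hamming distance of the resulting symbol string by the binary Hamming distance of the original matrix.

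First I would fix a bijection $\phi : J(m,w) \to [\binom{m}{w}]$ and lift it coordinatewise to a map $\Phi : J(m,w)^n \to [\binom{m}{w}]^n$ sending each matrix ${\bf C}$ to the length-$n$ string of column indices. Since $\phi$ is a bijection, $\Phi$ is injective, so applying $\Phi$ to a CWAC $\mathcal{C}$ of cardinality $A(m,n,w,d)$ produces a code $\Phi(\mathcal{C})$ over the alphabet $[\binom{m}{w}]$ of the same cardinality.

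Next I would control the minimum distance of $\Phi(\mathcal{C})$. For two columns ${\bf c}, {\bf c}' \in J(m,w)$, a column-wise comparison gives $\dh({\bf c},{\bf c}') \leq 2w$, with equality only loosely but the key point is that the distance is $0$ iff the columns are equal as symbols. Writing $s({\bf C}, {\bf C}')$ for the symbol Hamming distance of $\Phi({\bf C})$ and $\Phi({\bf C}')$, the binary distance decomposes as a sum over columns, and only columns contributing nonzero symbol difference can contribute to $\dh$, each contributing at most $2w$. Thus
\begin{equation} \nonumber
\dh({\bf C}, {\bf C}') \leq 2w \cdot s({\bf C}, {\bf C}').
\end{equation}
If $\mathcal{C}$ has minimum Hamming distance $2d$, then $s({\bf C},{\bf C}') \geq d/w$ for every pair of distinct codewords, and since $s$ is an integer this forces $s({\bf C}, {\bf C}') \geq \lceil d/w \rceil$.

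Hence $\Phi(\mathcal{C})$ is a code in $[\binom{m}{w}]^n$ of cardinality $A(m,n,w,d)$ with minimum Hamming distance at least $\lceil d/w \rceil$, yielding the bound $A(m,n,w,d) \leq C\!\left(\binom{m}{w}, n, \lceil d/w \rceil\right)$. There is no real obstacle here; the only point requiring a sentence of care is the per-column inequality $\dh({\bf c},{\bf c}') \leq 2w$, which holds because both columns have weight $w$ and so their symmetric difference has size at most $2w$.
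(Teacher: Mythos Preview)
Your proof is correct and essentially identical to the paper's own argument: both fix a bijection $J(m,w)\to\left[\binom{m}{w}\right]$, apply it columnwise, use the per-column bound $\dh({\bf c},{\bf c}')\le 2w$ to get $2d\le 2w\cdot s({\bf C},{\bf C}')$, and then take the ceiling. The paper's write-up is slightly terser but the idea is the same.
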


\begin{proof}
Let $\mathcal{C} \in J(m,w)^n$ have minimum distance $d$ and cardinality $A(m,n,w,d)$. There is a bijection between $J(m,w)$ and ${m \choose w}$; applying it columnwise, we map $\mathcal{C}$ to a code $f(\mathcal{C}) \subseteq [{m \choose w}]^n$. For any ${\bf C}^0, {\bf C}^1 \in \mathcal{C}$, we have 
\begin{equation} \nonumber
	d \leq	\dh({\bf C}^0, {\bf C}^1) \leq 2w |\{i : {\bf C}^0_i \neq {\bf C}^1_i \}| = 2w \dh(f({\bf C}^0), f({\bf C}^1)),
\end{equation}
and hence $f(\mathcal{C})$ has minimum distance at least $\frac{d}{w}$.
\end{proof}

The Bassalygo-Elias bound \cite{Bas65} is a crucial relation between the maximum cardinalities of constant-weight codes and nonrestricted binary codes. We now derive a generalization of the Bassalygo-Elias bound for CWACs.

\begin{proposition}[Bassalygo-Elias bound for CWACs] \label{prop:bassalygo}
For all $w_1 | w_2 | m$, we have
\begin{equation} \label{eq:bassalygo_w1}
 A \left(m\frac{w_1}{w_2}, n \frac{w_2}{w_1}, w_1, d\right) \geq \left\lceil \frac{ {m \frac{w_1}{w_2} \choose w_1}^{n\frac{w_2}{w_1}} }{ {m \choose w_2}^n }  A(m,n,w_2,d) \right\rceil.
\end{equation}
Therefore,
\begin{equation} 
 \label{eq:bassalygo1}
 \frac{{m \choose w}^n}{{mn \choose wn}} B(mn,nw,d) \leq A(m,n,w,d) \leq \frac{{m \choose w}^n}{\left(\frac{m}{w}\right)^{nw}} C\left(\frac{m}{w},nw,d \right).
\end{equation}
\end{proposition}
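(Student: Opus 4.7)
The plan is to establish the general inequality (\ref{eq:bassalygo_w1}) by an averaging argument over column-wise row permutations, and then derive (\ref{eq:bassalygo1}) as two specializations.

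For the main bound, let $\mathcal{C}$ be a CWAC in $J(m,w_2)^n$ of minimum distance $d$ and cardinality $A(m,n,w_2,d)$. Since $w_2 \mid m$ and $w_1 \mid w_2$, we may partition the $m$ rows into $w_2/w_1$ consecutive blocks of $m w_1/w_2$ rows each; splitting every column accordingly turns an $m \times n$ matrix into an $(m w_1/w_2) \times (n w_2/w_1)$ matrix. Consider the group action of an $n$-tuple $\sigma = (\sigma_0, \dots, \sigma_{n-1}) \in S_m^n$ that permutes rows of the $j$-th column of every codeword by $\sigma_j$. For each $\sigma$, the resulting code $\sigma(\mathcal{C})$ has the same cardinality and the same Hamming distance as $\mathcal{C}$, because distance decomposes column by column and each $\sigma_j$ acts isometrically on its column. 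I will call a matrix $\mathbf{M} \in J(m,w_2)^n$ \emph{good} if each of its $n w_2/w_1$ sub-columns (after splitting) has weight exactly $w_1$; every good matrix then lies in $J(m w_1/w_2, w_1)^{n w_2/w_1}$.

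The key step is a double count. For a fixed column $\mathbf{c} \in J(m,w_2)$, the number of $\sigma_j \in S_m$ such that $\sigma_j \mathbf{c}$ is good equals $w_2! \, (m-w_2)! \cdot \binom{m w_1/w_2}{w_1}^{w_2/w_1}$, since after choosing which positions in each sub-block carry the $w_1$ ones, the $w_2$ ones and $m-w_2$ zeros can be placed arbitrarily. Dividing by $|S_m|=m!$, the probability that a uniformly random $\sigma_j$ makes column $j$ good equals $p \df \binom{m w_1/w_2}{w_1}^{w_2/w_1} / \binom{m}{w_2}$. Because the $\sigma_j$ are independent across $j$, the probability that $\sigma(\mathbf{C})$ is good is $p^n$ for every $\mathbf{C} \in \mathcal{C}$. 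By linearity of expectation, the expected number of good codewords in $\sigma(\mathcal{C})$ is $|\mathcal{C}| p^n = A(m,n,w_2,d) \binom{m w_1/w_2}{w_1}^{n w_2/w_1} / \binom{m}{w_2}^n$, so some $\sigma$ achieves at least this many. The good codewords in $\sigma(\mathcal{C})$ form a CWAC in $J(m w_1/w_2,w_1)^{n w_2/w_1}$ with minimum distance $d$, and since its size is an integer we may round up, establishing (\ref{eq:bassalygo_w1}).

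For the left inequality of (\ref{eq:bassalygo1}), specialize (\ref{eq:bassalygo_w1}) with $m \mapsto mn$, $n \mapsto 1$, $w_1 = w$, and $w_2 = nw$; the divisibility conditions hold and the inequality reduces, via Proposition~\ref{prop:trivial_relations} applied to $A(mn,1,nw,d) = B(mn,nw,d)$, to the claimed lower bound. For the right inequality, specialize with $w_1 = 1$ and $w_2 = w$ (using $w \mid m$); then $A(m/w, nw, 1, d) = C(m/w, nw, d)$ by Proposition~\ref{prop:trivial_relations}, and rearranging (\ref{eq:bassalygo_w1}) yields the stated upper bound.

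The only genuinely delicate point is verifying that column-wise \emph{independent} row permutations preserve the CWAC minimum distance and that they make the $n$ "goodness" events independent; once this is nailed down, the rest is a product-of-probabilities computation and two bookkeeping specializations.
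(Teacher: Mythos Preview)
Your proposal is correct and follows essentially the same argument as the paper: both average over the action of $\mathfrak{S}_m^n$ by independent column-wise row permutations, count that a fixed column becomes ``good'' for exactly $w_2!\,(m-w_2)!\,\binom{m w_1/w_2}{w_1}^{w_2/w_1}$ permutations, and conclude by double counting (the paper) or, equivalently, by linearity of expectation and pigeonhole (you). The derivation of (\ref{eq:bassalygo1}) by the two specializations $w_1=w,\ w_2=nw$ and $w_1=1,\ w_2=w$ also matches the paper's ``simple change of variables''.
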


\begin{proof}
Let $\mathcal{C}$  be a CWAC in $J(m,w_2)^n$ with minimum distance $2d$ and cardinality $A(m,n,w_2,d)$. Let $\mathfrak{S}_m$ denote the symmetric group on $m$ elements and let $\mathfrak{S}_m^n$ be its $n$-fold cartesian product. For all $\pi \in \mathfrak{S}_m^n$, ${\bf C} \in \mathcal{C}$, we define the function $f(\pi,{\bf C})$ to be $f(\pi,{\bf C}) = 1$ if $\pi({\bf C}) \in J \left(m\frac{w_1}{w_2}, w_1 \right)^{n \frac{w_2}{w_1}}$ and $f(\pi,{\bf C}) = 0$ otherwise. We have for all $\pi \in \mathfrak{S}_m^n$
\begin{equation} \label{eq:b1}
 \sum_{{\bf C} \in \mathcal{C}} f(\pi,{\bf C}) = \left| \left\{ \pi({\bf C}) \in J\left(m\frac{w_1}{w_2}, w_1\right)^{n \frac{w_2}{w_1}} : {\bf C} \in C \right\} \right| \leq A\left(m\frac{w_1}{w_2}, n \frac{w_2}{w_1}, w_1, 2d\right)
\end{equation}
and for all ${\bf C} \in \mathcal{C}$
\begin{equation} \label{eq:b2}
 \sum_{\pi \in \mathfrak{S}_m^n} f(\pi,{\bf C}) = \left\{ {m \frac{w_1}{w_2} \choose w_1}^{\frac{w_2}{w_1}} w_2! (m-w_2)! \right\}^n.
\end{equation}
Combining (\ref{eq:b1}) and (\ref{eq:b2}), the double sum leads to (\ref{eq:bassalygo_w1}). In particular, applying (\ref{eq:bassalygo_w1}) with a simple change of variables leads to (\ref{eq:bassalygo1}). 
\end{proof}

Using a well-known property of optimal binary codes, the Bassalygo-Elias bound was refined by van Pul (see \cite{AGM92}) by a factor of $2$. However, no analogous property is known for optimal constant-weight codes so far, hence such a refinement cannot be performed. Corollary \ref{cor:bassalygo_log} below gives a slight loosening of the bound in (\ref{eq:bassalygo1}) which is much easier to compute for large parameter values.

\begin{corollary} \label{cor:bassalygo_log}
For $2w \leq m$, we have $A(m,n,w,d) \geq \sqrt{\pi w n} 2^{-\frac{n}{2}(3 +\log w)} B(mn,nw,d)$.
\end{corollary}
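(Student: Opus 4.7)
The plan is to derive the bound directly from the Bassalygo-Elias lower bound of Proposition \ref{prop:bassalygo}, which states
$$A(m,n,w,d) \geq \frac{\binom{m}{w}^n}{\binom{mn}{wn}} B(mn,nw,d).$$
Thus the corollary reduces to establishing the inequality
$$\frac{\binom{m}{w}^n}{\binom{mn}{wn}} \geq \sqrt{\pi w n}\, 2^{-\frac{n}{2}(3+\log w)} = \frac{\sqrt{\pi w n}}{(8w)^{n/2}},$$
since $(8w)^{n/2}=2^{3n/2}\,w^{n/2}=2^{(n/2)(3+\log w)}$.

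The key observation is that the two binomial coefficients share the same relative weight $w/m = wn/(mn)$, so when one uses Stirling-based bounds on $\binom{n}{k}$ in terms of $2^{nH(k/n)}$, the exponential factors cancel exactly and only a polynomial correction remains. Concretely, I would invoke the standard entropy bounds
$$\binom{m}{w} \geq \frac{2^{mH(p)}}{\sqrt{8\,m p(1-p)}},\qquad \binom{mn}{wn} \leq \frac{2^{mnH(p)}}{\sqrt{2\pi\, mn\, p(1-p)}},$$
where $p=w/m$. Raising the first to the $n$-th power and dividing, the $2^{mnH(p)}$ terms cancel and the remaining expression simplifies (using $mp(1-p)=w(1-w/m)$) to
$$\frac{\binom{m}{w}^n}{\binom{mn}{wn}} \geq \frac{\sqrt{2\pi w n(1-w/m)}}{\bigl[8w(1-w/m)\bigr]^{n/2}}.$$

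To finish, I would apply the hypothesis $2w\leq m$, which gives $\tfrac12 \leq 1-w/m \leq 1$. This yields the lower bound $\sqrt{2\pi w n(1-w/m)}\geq \sqrt{\pi w n}$ for the numerator, and the upper bound $[8w(1-w/m)]^{n/2}\leq (8w)^{n/2}$ for the denominator, delivering the claimed ratio. Multiplying back by $B(mn,nw,d)$ completes the proof. The only subtle point is the appeal to the precise Stirling-based estimates $\binom{n}{k}\in[2^{nH(p)}/\sqrt{8np(1-p)},\;2^{nH(p)}/\sqrt{2\pi np(1-p)}]$; these are standard and can be obtained from Robbins' refinement of Stirling's formula. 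No combinatorial obstacle arises: the hypothesis $2w\leq m$ is precisely what decouples the factor $1-w/m$ in the bounds, and everything else is algebraic.
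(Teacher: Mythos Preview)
Your proposal is correct and follows essentially the same argument as the paper: you start from the Bassalygo--Elias lower bound in Proposition~\ref{prop:bassalygo}, apply the standard entropy bounds on binomial coefficients (the paper cites \cite[Ch.~10, Lemma~7]{MS77}, which are precisely the inequalities you write), obtain the same intermediate estimate $\sqrt{2\pi wn(1-w/m)}\big/[8w(1-w/m)]^{n/2}$, and then use $2w\le m$ to bound $1-w/m\in[\tfrac12,1]$ in numerator and denominator.
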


\begin{proof}
Applying the bounds on the binomial coefficient in \cite[Ch. 10, Lemma 7]{MS77} to (\ref{eq:bassalygo1}), we obtain $A(m,n,w,d) \geq \sqrt{\frac{2\pi wn(1-\frac{w}{m})}{(8w(1-\frac{w}{m}))^n}} B(mn,nw,d)$. Since $w \leq \frac{m}{2}$, we easily obtain the desired result.
\end{proof}

%We remark that the ratio between $A(m,n,w,2d)$ and $B(mn,nw,2d)$ in Corollary \ref{cor:bassalygo_log} does not depend on $m$.

Binary codes, and especially constant-weight codes, are closely related to spherical codes. More precisely, any constant-weight code can be mapped into a spherical code of a given length \cite[Theorem 2]{AVZ00}. The cardinality of a constant-weight code is hence upper bounded by the maximum cardinality of an optimal spherical code with related dimension and maximum cosine. Proposition \ref{prop:spherical} generalizes this relation to the case of CWACs.

\begin{proposition}[Relation between CWACs and spherical codes] \label{prop:spherical}
We have $$A(m,n,w,d) \leq \As\left(n(m-1), 1 - \frac{dm}{nw(m-w)}\right).$$
\end{proposition}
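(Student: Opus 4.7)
The plan is to lift the classical Ericson--Zinoviev map from constant-weight codes to spherical codes, cited in the paper as \cite[Theorem 2]{AVZ00}, to the CWAC setting by applying it columnwise and concatenating.

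First I would define, for a single column $\mathbf{c} \in J(m,w)$, the image $\phi(\mathbf{c}) \in \mathbb{R}^m$ obtained by replacing each $1$ by $a = \sqrt{(m-w)/(mw)}$ and each $0$ by $b = -\sqrt{w/(m(m-w))}$. A short calculation shows $\phi(\mathbf{c})$ has unit norm and lies in the hyperplane $\{\mathbf{y} : \sum_i y_i = 0\}$, hence on the unit sphere of an $(m-1)$-dimensional subspace of $\mathbb{R}^m$. For two columns $\mathbf{c},\mathbf{c}'\in J(m,w)$ with $\dh(\mathbf{c},\mathbf{c}') = 2d_h$, a direct computation (splitting positions into $11$, $00$, and mixed, and using $ab = -1/m$) gives
\begin{equation} \nonumber
\langle \phi(\mathbf{c}),\phi(\mathbf{c}')\rangle \;=\; 1 - \frac{m\,\dh(\mathbf{c},\mathbf{c}')}{2w(m-w)}.
\end{equation}

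Next, for a CWAC codeword $\mathbf{C} = (\mathbf{c}_0,\ldots,\mathbf{c}_{n-1}) \in J(m,w)^n$, define $\Phi(\mathbf{C}) = \frac{1}{\sqrt{n}}\bigl(\phi(\mathbf{c}_0),\ldots,\phi(\mathbf{c}_{n-1})\bigr) \in \mathbb{R}^{mn}$. Since each block contributes squared norm $1$, $\Phi(\mathbf{C})$ has unit norm, and since each block lies in an $(m-1)$-dimensional hyperplane, the image of $\Phi$ is contained in the unit sphere of a subspace of dimension $n(m-1)$. I would then use additivity of the Hamming distance across columns, namely $\dh(\mathbf{C},\mathbf{C}') = \sum_{i=0}^{n-1} \dh(\mathbf{c}_i,\mathbf{c}'_i)$, to obtain
\begin{equation} \nonumber
\langle \Phi(\mathbf{C}),\Phi(\mathbf{C}')\rangle \;=\; \frac{1}{n}\sum_{i=0}^{n-1}\langle \phi(\mathbf{c}_i),\phi(\mathbf{c}'_i)\rangle \;=\; 1 - \frac{m\,\dh(\mathbf{C},\mathbf{C}')}{2nw(m-w)}.
\end{equation}

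Finally, if $\mathcal{C}$ is a CWAC in $J(m,w)^n$ with minimum Hamming distance $2d$ and cardinality $A(m,n,w,d)$, then $\Phi$ is injective on $\mathcal{C}$ and every pair of distinct images has inner product at most $1 - \frac{dm}{nw(m-w)}$. Thus $\Phi(\mathcal{C})$ is a spherical code of dimension $n(m-1)$ with maximum cosine at most $1 - \frac{dm}{nw(m-w)}$, giving the stated bound. The only mildly delicate step is the inner-product computation in the one-column case; extending it to CWACs by averaging over columns is routine, and no new combinatorial obstruction arises because the squared-norm bookkeeping factorizes perfectly across the $n$ blocks.
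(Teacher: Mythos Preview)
Your proof is correct and is essentially the same as the paper's: both embed $J(m,w)^n$ affinely onto the unit sphere of an $n(m-1)$-dimensional subspace and read off the cosine from the Hamming distance. The only difference is cosmetic --- the paper uses the $\pm 1$ map $\Omega$ followed by a global translation by $\frac{m-2w}{m}\mathbf{1}$ and scaling by $r^{-1}=\bigl(2\sqrt{nw(m-w)/m}\bigr)^{-1}$, whereas you build the same map (up to a global sign) directly by choosing the constants $a,b$ per coordinate and normalizing columnwise by $1/\sqrt{n}$.
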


\begin{proof}
Let $\Omega$ denote the mapping from $\gf(2)^{mn}$ to $\mathbb{R}^{mn}$ which sends $0 \rightarrow 1$ and $1 \rightarrow -1$ coordinatewise. Then $\Omega(J(m,w)^n) = \{ {\bf x} \in \{1,-1\}^{mn} : {\bf x} {\bf A} = (m-2w){\bf 1} \}$, where ${\bf A} \in \mathbb{R}^{mn \times n}$ is formed by cascading $n$ all-ones column vectors of length $m$. Hence any point ${\bf x} \in \Omega(J(m,w)^n)$ satisfies $({\bf x} - {\bf c}){\bf A} = {\bf 0}$, where ${\bf c} = \frac{m-2w}{m}{\bf 1}$ and $|| {\bf x} - {\bf c} || = 2 \sqrt{\frac{nw(m-w)}{m}} = r$. Therefore, $\Omega(J(m,w)^n)$ is a subset of the $n(m-1)$-dimensional sphere of radius $r$ around ${\bf c}$. Also, it is easily shown that $d_E(\Omega({\bf X}), \Omega({\bf Y})) = 2 \sqrt{\dh({\bf X}, {\bf Y})}$ for all ${\bf X}, {\bf Y} \in \gf(2)^{mn}$. Let $\mathcal{C}$ be a CWAC in $J(m,w)^n$ with minimum distance $2d$. Then the code $\frac{1}{r}(\Omega(\mathcal{C}) - {\bf c})$ is a spherical code of dimension $n(m-1)$ with minimum Euclidean distance $2\sqrt{2d}$ and maximum cosine given by $1 - \frac{4d}{r^2}$.
\end{proof}

We remark that once the total length $M = mn$ and the total weight $W = nw$ of the CWAC are fixed, the maximum cosine in the upper bound of Proposition \ref{prop:spherical} remains constant, while the dimension of the space is a decreasing function of $n$. This dimension reaches its maximum for $n = 1$, where Proposition \ref{prop:spherical} corresponds to \cite[Theorem 2]{AVZ00} established for constant-weight codes and reaches its minimum value for liftings of Hamming metric codes.

Using the exact values of $\As(n,s)$ reviewed in Section \ref{sec:preliminaries}, Proposition \ref{prop:spherical} can be reduced as follows
%\begin{eqnarray}
%	\nonumber
%	A(m,n,w,d) &\leq& \left\lfloor \frac{dm}{dm - nw(m-w)} \right\rfloor, \quad \text{if } d \geq \frac{nw(m-w)}{m}\left(1 + \frac{1}{n(m-1)}\right)\\
%	\nonumber
%	A(m,n,w,d) &\leq& n(m-1) + 1, \quad \text{if } \frac{nw(m-w)}{m} < d \leq \frac{nw(m-w)}{m}\left(1 + \frac{1}{n(m-1)}\right)\\
%	\nonumber
%	A\left(m,n,w, \frac{nw(m-w)}{m} \right) &\leq& 2n(m-1).
%\end{eqnarray}
\begin{eqnarray}
	\nonumber
	A(m,n,w,d) &\leq& \left\lfloor \frac{d}{d - \Delta} \right\rfloor, \quad \text{if } d \geq \Delta\left(1 + \frac{1}{n(m-1)}\right)\\
	\nonumber
	A(m,n,w,d) &\leq& n(m-1) + 1, \quad \text{if } \Delta < d \leq \Delta\left(1 + \frac{1}{n(m-1)}\right)\\
	\nonumber
	A\left(m,n,w, \Delta \right) &\leq& 2n(m-1),
\end{eqnarray}
where $\Delta = \frac{nw(m-w)}{m}$

\end{document}